\newtheorem{theorem}{Theorem}
\newtheorem{remark}{Remark}
\newtheorem{definition}{Definition}
\newtheorem{example}{Example}
\newtheorem{lemma}{Lemma}
\newif\ifshow
\title{Commuting Embeddings for Parallel Strategies in Non-local Games}
\author[1,2,$\S$]{Sarah Chehade}
\author[3,$\dagger$]{Andrea Delgado}
\author[2,$\ddagger$]{Elaine Wong}
\affil[1]{UTC Quantum Center, University of Tennessee, Chattanooga, TN 37403, USA}
\affil[2]{Computer Science $\&$ Mathematics Division, Oak Ridge National Laboratory, Oak Ridge, TN 37831, USA\thanks{This manuscript has been authored by UT-Battelle, LLC under Contract No. DE-AC05-00OR22725 with the U.S. Department of Energy. The United States Government retains and the publisher, by accepting the article for publication, acknowledges that the United States Government retains a non-exclusive, paid-up, irrevocable, world-wide license to publish or reproduce the published form of this manuscript, or allow others to do so, for United States Government purposes. The Department of Energy will provide public access to these results of federally sponsored research in accordance with the DOE Public Access Plan (http://energy.gov/downloads/doe-public-access-plan).} }
\affil[3]{Physics Division, Oak Ridge National Laboratory, Oak Ridge, TN 37831, USA}
\affil{
  $\S$ sarah-chehade@utc.edu \quad
  $\dagger$ delgadoa@ornl.gov \quad
  $\ddagger$ wongey@ornl.gov
}
\date{\today}
\begin{document}

\maketitle

\begin{abstract}
    Non-local games (NLGs) provide a versatile framework for probing quantum correlations and for benchmarking the power of entanglement. In finite dimensions, the standard method for playing several games in parallel requires a tensor product of the local Hilbert spaces, which scales additively in the number of qubits. In this work, we show that this additive cost can be reduced by exploiting algebraic embeddings. We introduce two forms of compressions. First, when a referee selects one game from a finite collection of games at random, the game quantum strategy can be implemented using a maximally entangled state of dimension equal to the largest individual game, thereby eliminating the need for repeated state preparations. Second, we establish conditions under which several games can be played simultaneously in parallel on fewer qubits than the tensor product baseline. These conditions are expressed in terms of commuting embeddings of the game algebras. Moreover, we provide a constructive framework for building such embeddings. Using tools from Lie theory, we show that aligning the various game algebras into a common Cartan decomposition enables such a qubit reduction. Beyond the theoretical contribution, our framework casts NLGs as algebraic primitives for distributed and resource constrained quantum computations and suggested NLGs as a comparable device independent dimension witness.     
\end{abstract}

\section{Introduction}\label{sec: intro}

Non-local games (NLGs) provide an operational framework for probing quantum correlations. In the standard setting, two or more non-communicating players receive classical questions from a referee and must respond with classical answers. The winning condition depends jointly on their inputs and outputs. A strategy is \textit{classical} if players rely only on shared randomness, and \textit{quantum} if they share entanglement and perform local measurements. Certain games admit strictly higher winning probabilities under quantum strategies, certifying correlations beyond classical resources.

Being able to witness non-classical correlations between devices makes NLGs central to \textit{device-independent} quantum information. Since the game referee only sees classical input-output statistics, violations of classical bounds can certify the presence of entanglement without assumptions about internal states or measurements. But even beyond entanglement certification, NLGs enable device-independent generation of randomness~\cite{pan2021oblivious}, self-testing of quantum states and measurements~\cite{vsupic2020self}, and verification of non-local correlations.

In this paper, we study how \textit{multiple games with perfect strategies} can be realized simultaneously within a single Hilbert space, without requiring the full tensor product of the underlying spaces. Formally, if each game $\mathcal{G}_i$ admits a perfect strategy on $2^{n_i}$-dimensional local spaces, naive parallel composition requires dimension $2^{\sum_i n_i}$. We show that, under algebraic conditions, the same strategies can be implemented in a Hilbert space of dimension $2^n$ for some $n < \sum_i n_i$, by embedding the operator algebras into commuting subalgebras. Previous work on “gluing” NLGs~\cite{cui2020generalization, manvcinska2021glued} merges rule sets to characterize states or operators, often for applications to self-testing. Our approach presents a different focus with two key benefits: it separates the additive dimension bound from the minimal algebraic bound, and offers a constructive view of NLGs as algebraic primitives for parallel tasks. Families of strategies can coexist in reduced dimension whenever their algebras commute, tying NLGs to themes in operator algebras and representation theory.

To frame the broader work of NLGs in a proper theoretical context, its landscape is derived from Tsirelson-type bounds~\cite{todorov2024quantum}, rigidity results~\cite{mousavi2022nonlocal}, self-testing theorems~\cite{miller2012optimal}, with deep connections to quantum complexity theory~\cite{Ji2021}. Canonical games such as CHSH~\cite{clauser1969proposed} and Mermin–Peres~\cite{arkhipov2012extending} have illuminated the limits of quantum strategies and their computational hardness. While CHSH has been realized across platforms, scaling to richer games is challenging—particularly implementing multiple device-independent tests on the same hardware, or playing families of games in parallel with explicit resource accounting. Recent work~\cite{furches2025application} shows NLGs can also serve as practical benchmarks for near-term quantum devices, signaling a shift toward operational deployment. In contrast, algebraic features of a game can can be directly used to design quantum algorithms to play NLGs~\cite{chehade2025game}. This highlights algorithmic potential for NLGs to explicitly construct game strategies, which are needed for scalable hardware implementations.

Yet, a unifying methodology connecting the algebraic structure of NLGs to hardware resource constraints is missing. One promising direction explored in this paper casts NLGs as algorithmic primitives. We formalize a qubit-reduction method for playing multiple NLGs in parallel by embedding each perfect strategy into commuting blocks within a shared Hilbert space. Then, instead of focusing on self-testing applications, our goal is qubit compression motivated by global consistency constraints, with applications to distributed networks and resource-efficient computation. In other words, \textbf{how do local constraints combine to form a global consistency condition} for applications such as a power grid or communications network setup?

    The remainder of the paper is organized as follows:
    \Cref{sec: prelims} defines general non-local games and fixes the notations used throughout. \Cref{sec: warmup} formally explains the steps to play parallel games together without any reductions, via tensor products. \Cref{sec: dependent} explains the results of potential reductions while maintaining perfect quantum strategies for a finite collection of games. Finally, \Cref{sec: Cartan} discusses the role of the Lie algebra, for which we use to search for the optimal measurement operators. A pseudo-algorithm is presented here to summarize how to leverage the different results for game compression.

\section{Preliminaries}\label{sec: prelims}
    In this section, we fix our notations and assumptions. Throughout this manuscript, we restrict our attention to finite-dimensional Hilbert spaces. We focus on two-player non-local games for simplicity, however our results may extend to any finite number of players.
    
    A non-local game $\mathcal{G}$ is specified by input pairs $(x,y)\in \mathcal{I}_A\times \mathcal{I}_B$ and output pairs $(a,b)\in \mathcal{O}_A\times \mathcal{O}_B$ along with a probability distribution function $\mu$ over inputs and a rule function $\lambda: \mathcal{I}_A\times \mathcal{I}_B\times \mathcal{O}_A\times \mathcal{O}_B\to \{0,1\}$. The referee samples $(x,y)\sim \mu$ and sends $x$ to player Alice and $y$ to player Bob. The players reply with the outputs $(a,b)$ based on a predetermined strategy that we will define below. The players win if and only if $\lambda(x,y,a,b)=1.$
    When we write $\vec{x} = (x_1,\dots,x_K)$ and $\vec{y} = (y_1,\dots,y_K)$,  we mean $K$ simultaneous questions posed to Alice and Bob, respectively,  corresponding to playing $K$ parallel instances of the base game.  The outputs are then also denoted $\vec{a}=(a_1,\dots,a_K)$ and $\vec{b}=(b_1,\dots,b_K)$. A game is called \textit{synchronous} provided that $\mathcal{I}_A=\mathcal{I}_B$ and $\mathcal{O}_A=\mathcal{O}_B$. 
    
    A \textit{classical strategy} consists of deterministic response functions
    \begin{equation}
        f:\mathcal{I}_A\to\mathcal{O}_A \, \quad\text{and}\quad \, g:\mathcal{I}_B\to\mathcal{O}_B.  
    \end{equation}
    A \textit{quantum strategy} consists of finite-dimensional Hilbert spaces per player $\mathcal{H}_A$ and $\mathcal{H}_B$, a shared entangled state $\ket{\psi}\in\mathcal{H}_A\otimes \mathcal{H}_B$, and positive operator-valued measures (POVMs)  $\{M_{x,a}\}_a$ and $\{N_{y,b}\}_b$ on $\mathcal{H}_A$ and $\mathcal{H}_B$, respectively. The induced answer distribution from a quantum strategy is then 
    \begin{equation}
        p_q(a,b|x,y) \;=\; \bra{\psi}\, M_{x,a}\otimes N_{y,b}\,\ket{\psi}.
    \end{equation}
        The \textit{classical game value} of a game $\mathcal{G}$ is defined as
        \begin{equation}
            \mathrm{Val}_c(\mathcal{G}) = \sup_{f,g}   \sum\limits_{x,y}\mu(x,y)\sum\limits_{a,b} \lambda(x,y,a,b)\, \mathrm{Pr}[f(x)=a, g(y)=b],
        \end{equation}
        where the supremum is taken over all classical strategies.

        The \textit{quantum game value} of a game $\mathcal{G}$ is defined as
        \begin{equation}
             \mathrm{Val}_q(\mathcal{G})= \sup_{p_{\mathrm{q}}}  \sum_{x,y}\mu(x,y)\sum_{a,b} \lambda(x,y,a,b)\, p_{\mathrm{q}}(a,b|x,y),
        \end{equation}
        where the supremum is taken over all quantum strategies. 
    A strategy is called \textit{perfect}, if it achieves a game value of 1. 
    Let $\mathcal{G}$ be a game with a perfect strategy. The \textit{game algebra} for Alice is a unital $C^*-$algebra generated by her measurement operators   $\{M_{x,a}\}_a$ subject to the relations imposed by the perfect strategy constraints. Similarly, Bob has a game algebra. These algebras encode all operator identities that must hold in any perfect strategy for the game. 

    In this paper, we focus on various ways to embed games into larger structures while maintaining game strategies but reducing the required amount of qubits for future applications. This definition is formalized as follows:
        Let $\bm{\mathcal{G}}=\{\mathcal{G}_i\}_{i=1}^K$ be a collection on non-local games with perfect strategies on $n_i$ qubits per player, generating algebras $\mathcal{A}_i$ and $\mathcal{B}_i$. We say that $\bm{\mathcal{G}}$ compresses into a single game on $n$ qubits if there exists
        \begin{itemize}
            \item embeddings $\iota_i^A:\mathcal{A}_i\to \mathcal{B}(\mathbb{C}^{2^n})$ and $\iota_i^B:\mathcal{B}_i\to \mathcal{B}(\mathbb{C}^{2^n})$, whose ranges commute across $i$. i.e., $[X,Y] = 0$, for all   $X \in \iota_i^A(\mathcal{A}_i) \cup \iota_i^B(\mathcal{B}_i)$ and   $Y \in \iota_j^A(\mathcal{A}_j) \cup \iota_j^B(\mathcal{B}_j)$, whenever $i \neq j$ and
            \item a single bipartite state $\ket{\Psi}\in\mathbb{C}^{2^n}\otimes \mathbb{C}^{2^n}$,
        \end{itemize}
        such that the embedded measurements reproduce the winning strategies of each game $\mathcal{G}_i$.

\section{Baseline Games}\label{sec: warmup}
    In this section, we observe the baseline independent case before considering any qubit reductions across games. This case serves two purposes. First, it establishes how perfect strategies theoretically tensor without cross-game inference, which may be used as a \textit{hardware benchmark} for quantum devices. Second, this notion may be used as a \textit{resource benchmark} by counting qubit costs. Any subsequent reduction result must therefore beat this additive baseline while preserving the perfect strategy.

    In addition, we consider the `simplest' compression, where the referee samples uniformly from a finite number of NLGs. This result shows how a single maximally entangled state of local dimension equal to the largest game suffices. 

\begin{theorem}\label{thm: indep}
     Let $\{\mathcal{G}_i\}_{i=1}^K$ be $K$ non-local games, each admitting a perfect quantum strategy on a fixed number of qubits $n_i$ for each player. 
    Let $\mathcal{A}_i$ be the $C^*$-algebra generated by the measurement operators of $\mathcal{G}_i$ for Alice, and $\mathcal{B}_i$ the corresponding algebra for Bob. 
    Suppose that for each game $\mathcal{G}_i$ there exists a shared entangled state 
    \begin{equation}
        \ket{\psi_i} \in \mathbb{C}^{2^{n_i}} \otimes \mathbb{C}^{2^{n_i}}
    \end{equation}
    and measurement operators $\{M_{x_i,a_i}\}_a \subset \mathcal{A}_i$ and $\{N_{y_i,b_i}\}_b \subset \mathcal{B}_i$ 
    achieving the winning conditions of $\mathcal{G}_i$ with probability $1$. 
    Then there exists a joint state
    \begin{equation}
        \ket{\Psi} \;=\; \bigotimes_{i=1}^K \ket{\psi_i} \;\in\; \mathbb{C}^{2^N} \otimes \mathbb{C}^{2^N}, \text{ for } N := \sum_{i=1}^K n_i,
    \end{equation}
    and measurement operators 
    \begin{equation}
         M_{\vec{x},\vec{a}} = \bigotimes_{i=1}^K M_{x_i,a_i}, 
        \quad N_{\vec{y},\vec{b}} = \bigotimes_{i=1}^K N_{y_i,b_i},
    \end{equation}
    such that they define a perfect quantum strategy. i.e., the players can win all $K$ games played in parallel with probability $1$. 
\end{theorem}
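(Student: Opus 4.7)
The plan is to verify that the tensor-product construction yields a legitimate quantum strategy and then to show that the joint winning probability factorizes across the $K$ games, so that each factor being $1$ forces the whole product to be $1$.

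First I would confirm that $\{M_{\vec{x}, \vec{a}}\}_{\vec{a}}$ and $\{N_{\vec{y}, \vec{b}}\}_{\vec{b}}$ are POVMs on $\mathcal{H}_A = \bigotimes_{i=1}^K \mathbb{C}^{2^{n_i}}$ and on the corresponding space for Bob. Positivity is inherited from the fact that the tensor product of positive operators is positive, and completeness follows by distributing the outcome sum across tensor factors, using that each $\{M_{x_i, a_i}\}_{a_i}$ is already a POVM on its own factor. This step is routine but must be recorded, because otherwise the Born rule on the joint space is not well-defined.

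The heart of the argument is the Born rule calculation. Since $\ket{\Psi} = \bigotimes_i \ket{\psi_i}$ and $M_{\vec{x}, \vec{a}} \otimes N_{\vec{y}, \vec{b}}$ decomposes into per-game operators acting on separate tensor factors, the joint probability factorizes:
\[
p(\vec{a}, \vec{b} \mid \vec{x}, \vec{y}) \;=\; \prod_{i=1}^K \bra{\psi_i} M_{x_i, a_i} \otimes N_{y_i, b_i} \ket{\psi_i} \;=\; \prod_{i=1}^K p_i(a_i, b_i \mid x_i, y_i).
\]
Combining this with the natural parallel rule $\lambda(\vec{x}, \vec{y}, \vec{a}, \vec{b}) = \prod_i \lambda_i(x_i, y_i, a_i, b_i)$ and product question distribution $\mu(\vec{x}, \vec{y}) = \prod_i \mu_i(x_i, y_i)$, the parallel game value decouples as $\prod_i \mathrm{Val}_q(\mathcal{G}_i)$, which equals $1$ by hypothesis.

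The main subtlety is conceptual rather than computational: one must be explicit that playing $K$ games in parallel means simultaneously winning every sub-game, so the parallel rule is the logical AND of the individual rules (with any natural choice of product question distribution). A related bookkeeping point is maintaining a consistent ordering of tensor factors on Alice's side and Bob's side, so that $M_{x_i, a_i}$ and $N_{y_i, b_i}$ act on matched Hilbert-space factors paired through $\ket{\psi_i}$; the non-crossing tensor structure of $\ket{\Psi}$ makes this automatic, and is precisely what enables the factorization of the Born rule. Once these conventions are fixed, the rest of the argument is direct algebraic manipulation, with no need for any algebraic commutation assumptions between the $\mathcal{A}_i$ or $\mathcal{B}_i$ since the tensor product structure already places them on disjoint factors.
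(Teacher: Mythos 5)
Your proposal is correct and follows essentially the same route as the paper's proof: verify the tensor products are POVMs, factorize the Born-rule probability over the $K$ disjoint tensor factors, and observe that each per-game factor equals $1$ so the product does too. The only cosmetic difference is that you phrase the conclusion via the product of game values while the paper sums directly over accepted outcome tuples for each fixed question pair, but these are the same computation.
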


\begin{proof}
    Let $i\in\{1,\dots,K\}$. By assumption, $\mathcal{G}_i$ has a perfect strategy. That is, for every question pair $(x_i,y_i)$, the outcome distribution produced by the POVMs $\{M_{x_i,a_i}\}_{a_i}$ and $\{N_{y_i,b_i}\}_{b_i}$ on $\ket{\psi_i}$ is
    \begin{equation}\label{eq: win per game}
        \sum_{\substack{a_i,b_i:\\ \lambda_i(x_i,y_i,a_i,b_i)=1 }} \bra{\psi_i}\,M_{x_i,a_i}\otimes N_{y_i,b_i}\,\ket{\psi_i}\;=\;1.
    \end{equation}
    Define the global Hilbert space as
    \begin{equation}     
        \mathcal{H}_A\;=\;\bigotimes_{i=1}^K \mathbb{C}^{2^{n_i}}  \quad\text{and}\quad \mathcal{H}_B\;=\;\bigotimes_{i=1}^K \mathbb{C}^{2^{n_i}},
    \end{equation}
    the global state $\ket{\Psi}=\bigotimes\limits_{i=1}^K \ket{\psi_i}$, and for any question tuple $(\vec{x},\vec{y})=((x_1,\dots,x_K),(y_1,\dots,y_K))$, define product POVMs
    \begin{equation*}
        \{M_{\vec{x},\vec{a}}\}_{\vec{a}}=\bigotimes_{i=1}^K \{M_{x_i,a_i}\}_{a_i} \quad\text{and}\quad \{N_{\vec{y},\vec{b}}\}_{\vec{b}}=\bigotimes_{i=1}^K \{N_{y_i,b_i}\}_{b_i}.
    \end{equation*}
    These are valid POVMs since the tensor product of POVMs is a POVM.
    The success probability of the product strategy on questions $(\vec{x},\vec{y})$ is 
    \begin{equation*}
        \sum_{\vec{a},\vec{b}}\bra{\Psi}\,M_{\vec{x},\vec{a}}\otimes N_{\vec{y},\vec{b}}\,\ket{\Psi}.
    \end{equation*}
    Using $\ket{\Psi}=\bigotimes\limits_i \ket{\psi_i}$ and $M_{\vec{x},\vec{a}}\otimes N_{\vec{y},\vec{b}} =\bigotimes\limits_i \big(M_{x_i,a_i}\otimes N_{y_i,b_i}\big)$, we factorize 
    \begin{equation*}
    \bra{\Psi}\,M_{\vec{x},\vec{a}}\otimes N_{\vec{y},\vec{b}}\,\ket{\Psi} =\prod_{i=1}^K \bra{\psi_i}\,M_{x_i,a_i}\otimes N_{y_i,b_i}\,\ket{\psi_i}
    \end{equation*}
    giving 
    \begin{equation*}
        \sum_{\vec{a},\vec{b}}\bra{\Psi}\,M_{\vec{x},\vec{a}}\otimes N_{\vec{y},\vec{b}}\,\ket{\Psi} =\prod_{i=1}^K \left(  \sum_{\substack{a_i,b_i:\\ \lambda_i(x_i,y_i,a_i,b_i)=1 }} \bra{\psi_i}\,M_{x_i,a_i}\otimes N_{y_i,b_i}\,\ket{\psi_i} \right).
    \end{equation*}
    By \Cref{eq: win per game}, each factor equals $1$, so the product also equals $1$. Thus, for every question pair $(\vec{x},\vec{y})$, the product strategy wins with probability $1$. That is, it is a perfect strategy for the parallel composition. 
    
    Finally, note that the total local dimension is $\dim \mathcal{H}_A=\dim \mathcal{H}_B=2^N$ with $N=\sum_i n_i$. The measurements act on disjoint tensor products and thus commutes across games, ensuring no cross-game interference.
\end{proof}

\begin{example}
    Take 4 games of Mermin's magic square game~\cite{Mermin1990b} and denote this set of games as $\bm{\mathcal{G}}^{\text{MSG}}$. See Table~\ref{tab:fourmsgs} for an example, where the elements in each square correspond to the POVMs used to compute the quantum strategy. Each MSG $\mathcal{G}_i$ requires 2 Bell states $\ket{\psi}_i = \ket{\phi^+}_{AB}\otimes\ket{\phi^+}_{AB}$. In addition, each game has a perfect strategy using $\bm{\mathcal{G}}^{\text{MSG}}$. By taking a tensor product across games, we use $\ket{\Psi}$ to be a tensor product of 8 Bell states (2 per game) and the measurement operators $M_{\vec{x},\vec{a}}$ and $N_{\vec{y},\vec{b}}$ are constructed by taking tensor products of all the measurement operators in each game. Figure 1 depicts how these games are aligned.  

    \begin{table}[ht]
    \centering
    \label{tab:fourmsgs}

    \setlength{\tabcolsep}{10pt}
    \renewcommand{\arraystretch}{1.2}
    
    \begin{tabular}{@{}c c@{}}
      \\ 
      \begin{tabular}{|c|c|c|}
        \hline 
        $Y\otimes X$ & $X \otimes Y$ & $Z \otimes Z$ \\ \hline
        $Y\otimes \mathbbm{1}$ & $\mathbbm{1}\otimes Y$ & $Y \otimes Y$ \\ \hline
        $\mathbbm{1}\otimes X$ & $X \otimes \mathbbm{1}$ & $X \otimes X$ \\
        \hline
      \end{tabular}
      &
      \begin{tabular}{|c|c|c|}
        \hline
        $Z\otimes X$ & $X \otimes Z$ & $Y \otimes Y$ \\ \hline
        $\mathbbm{1}\otimes X$ & $X \otimes \mathbbm{1}$ & $X \otimes X$ \\ \hline
        $Z\otimes \mathbbm{1}$ & $\mathbbm{1}\otimes Z$ & $Z \otimes Z$ \\
        \hline
      \end{tabular}
      \\ \\
      
      \begin{tabular}{|c|c|c|}
        \hline
        $\mathbbm{1}\otimes X$ & $X \otimes \mathbbm{1}$ & $X \otimes X$ \\ \hline
        $Z\otimes X$ & $X \otimes Z$ & $Y \otimes Y$ \\ \hline
        $Z\otimes \mathbbm{1}$ & $\mathbbm{1}\otimes Z$ & $Z \otimes Z$ \\
        \hline
      \end{tabular}
      &
      \begin{tabular}{|c|c|c|}
        \hline
        $\mathbbm{1}\otimes Z$ & $X \otimes \mathbbm{1}$ & $X \otimes Z$ \\ \hline
        $Z\otimes \mathbbm{1}$ & $\mathbbm{1}\otimes Y$ & $Z \otimes Y$ \\ \hline
        $Z\otimes Z$ & $X \otimes Y$ & $Y \otimes X$ \\
        \hline
      \end{tabular}
      
    \end{tabular}
    \caption{An example of $\bm{\mathcal{G}}^{\text{MSG}}$.}
    \end{table}
\end{example}

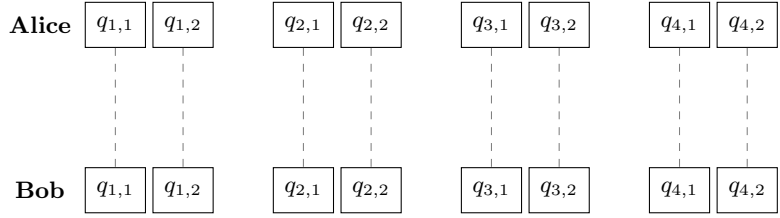
\begin{figure}[ht]\label{fig: parallel}
    \centering
    \begin{tikzpicture}[every node/.style={font=\small},
      qubit/.style={draw, rectangle, minimum width=0.8cm, minimum height=0.6cm},
      pairbox/.style={draw, fill=blue!15, rounded corners=3pt, minimum width=1.8cm, minimum height=1.1cm},
      arrow/.style={->, thick}]    
        \node[qubit] (aE1) at (0.0, 2.2) {$q_{1,1}$};
        \node[qubit] (aE2) at (0.9, 2.2) {$q_{1,2}$};
        
        \node[qubit] (aF1) at (2.5, 2.2) {$q_{2,1}$};
        \node[qubit] (aF2) at (3.4, 2.2) {$q_{2,2}$};
        
        \node[qubit] (aG1) at (5.0, 2.2) {$q_{3,1}$};
        \node[qubit] (aG2) at (5.9, 2.2) {$q_{3,2}$};
        
        \node[qubit] (aH1) at (7.5, 2.2) {$q_{4,1}$};
        \node[qubit] (aH2) at (8.4, 2.2) {$q_{4,2}$};
        
        \node[qubit] (bE1) at (0.0, 0.0) {$q_{1,1}$};
        \node[qubit] (bE2) at (0.9, 0.0) {$q_{1,2}$};
        
        \node[qubit] (bF1) at (2.5, 0.0) {$q_{2,1}$};
        \node[qubit] (bF2) at (3.4, 0.0) {$q_{2,2}$};
        
        \node[qubit] (bG1) at (5.0, 0.0) {$q_{3,1}$};
        \node[qubit] (bG2) at (5.9, 0.0) {$q_{3,2}$};
        
        \node[qubit] (bH1) at (7.5, 0.0) {$q_{4,1}$};
        \node[qubit] (bH2) at (8.4, 0.0) {$q_{4,2}$};
        
        \draw[gray, dashed] (aE1.south) -- (bE1.north);
        \draw[gray, dashed] (aE2.south) -- (bE2.north);
        
        \draw[gray, dashed] (aF1.south) -- (bF1.north);
        \draw[gray, dashed] (aF2.south) -- (bF2.north);
        
        \draw[gray, dashed] (aG1.south) -- (bG1.north);
        \draw[gray, dashed] (aG2.south) -- (bG2.north);
        
        \draw[gray, dashed] (aH1.south) -- (bH1.north);
        \draw[gray, dashed] (aH2.south) -- (bH2.north);
        
        \node at (-1.0, 2.2) {\textbf{Alice}};
        \node at (-1.0, 0.0) {\textbf{Bob}};    
    \end{tikzpicture}
    \caption{For $K=4$, this graphic depicts 4 parallel and independent games, where each game requires 2 entangled states. For game $i$ and qubit $j$, the $q_{i,j}$ denotes qubits while the dashed line represents the entanglement between the qubits (players).}
\end{figure}

    \begin{theorem}
        Let $\{\mathcal{G}_i\}_{i=1}^K$ be $K$ non-local games, each admitting a perfect quantum strategy on a fixed number of qubits $n_i$ for each player. 
        Let $\mathcal{A}_i$ be the $C^*$-algebra generated by the measurement operators of $\mathcal{G}_i$ for Alice, and $\mathcal{B}_i$ the corresponding algebra for Bob. 
        Suppose that for each game $\mathcal{G}_i$ there exists a shared entangled state $\ket{\psi_i} \in \mathbb{C}^{2^{n_i}} \otimes \mathbb{C}^{2^{n_i}}$
        and measurement operators $\{M_{x_i,a_i}\}_a \subset \mathcal{A}_i$ and $\{N_{y_i,b_i}\}_b \subset \mathcal{B}_i$ 
        achieving the winning conditions of $\mathcal{G}_i$ with probability $1$.  
        
        For each round, the referee samples $i\in\{1, \dots, K\}$ uniformly, sends questions $(x_i,y_i)$ for game $\mathcal{G}_i$, and accepts if only if game $\mathcal{G}_i$ is accepted. Let $d=\max_i 2^{n_i}$. Then there exists a quantum strategy on $\bar{n}=\max_i n_i$ qubits per player that wins the selected game with probability 1. 
    \end{theorem}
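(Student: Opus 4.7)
The plan is to reuse a single shared resource state across all rounds. Because the referee picks only one game $\mathcal{G}_i$ per round and both players can read $i$ off their inputs, the state need only be ``game-$i$ compatible'' after an $i$-dependent choice of local POVMs. The natural candidate is the maximally entangled state
\begin{equation*}
\ket{\Phi_d} := \frac{1}{\sqrt{d}}\sum_{k=0}^{d-1}\ket{k}\ket{k} \in \mathbb{C}^d \otimes \mathbb{C}^d,
\end{equation*}
which uses exactly $\bar{n} = \log_2 d$ qubits per player.

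For each $i$, I would split each player's $\bar{n}$ qubits as $n_i + (\bar{n} - n_i)$, yielding a tensor decomposition $\mathbb{C}^d \simeq \mathbb{C}^{d_i} \otimes \mathbb{C}^{d/d_i}$ with $d_i := 2^{n_i}$. A short index calculation, after rearranging tensor factors on the two sides, gives the key identity $\ket{\Phi_d} = \ket{\Phi_{d_i}} \otimes \ket{\Phi_{d/d_i}}$: on the first $n_i$ qubits of each player the shared state is already a maximally entangled state on $\mathbb{C}^{d_i} \otimes \mathbb{C}^{d_i}$, while the remaining $\bar{n} - n_i$ qubits carry an independent maximally entangled ancilla block.

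Next I would appeal to the standard reduction that any perfect quantum strategy admits a perfect strategy on a maximally entangled state of appropriate dimension---for synchronous games this is immediate from the tracial characterization, and in general one may absorb the non-uniform Schmidt coefficients of $\ket{\psi_i}$ into one side's measurement via the Choi-Jamiolkowski correspondence and complete the result to a POVM using the $d/d_i$-dimensional ancilla room. Granted this, I may assume $\mathcal{G}_i$ is won perfectly on $\ket{\Phi_{d_i}}$ by POVMs $\{M^{(i)}_{x,a}\}$ and $\{N^{(i)}_{y,b}\}$. Since the players know $i$, Alice applies $\widetilde{M}^{(i)}_{x,a} := M^{(i)}_{x,a} \otimes I_{d/d_i}$ and Bob applies $\widetilde{N}^{(i)}_{y,b} := N^{(i)}_{y,b} \otimes I_{d/d_i}$ on $\mathbb{C}^d$; both are valid POVMs, and the success probability factorizes across the game and ancilla blocks as $\bra{\Phi_{d_i}} M^{(i)}_{x,a} \otimes N^{(i)}_{y,b} \ket{\Phi_{d_i}} \cdot 1$, summing over winning $(a,b)$ to $1$.

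The tensor split of $\ket{\Phi_d}$, the POVM extension by identity, and the factorization of the inner product are routine. The real obstacle is the reduction to $\ket{\Phi_{d_i}}$ in the third paragraph: this is where the $d/d_i$ ancilla room is used essentially, and where the ``one game per round'' hypothesis carries the argument, since without it the ancilla block could not be freely reused across games whose original perfect-strategy states may carry different Schmidt spectra.
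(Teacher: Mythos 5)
Your overall architecture is the same as the paper's: a single maximally entangled state $\ket{\Phi_d}$ on $\bar n$ qubits per player, with the $i$-th game's POVMs lifted into $\mathcal{B}(\mathbb{C}^d)$ and applied once the question reveals $i$. Your way of lifting --- factoring $\ket{\Phi_d}=\ket{\Phi_{d_i}}\otimes\ket{\Phi_{d/d_i}}$ and padding the POVM elements with the identity on the ancilla block --- is actually cleaner than the paper's, which conjugates the POVM elements by isometries $U_{A_i},U_{B_i}$ and then asserts $\bra{\Phi_d}\widehat M_{x_i,a_i}\otimes\widehat N_{y_i,b_i}\ket{\Phi_d}=\bra{\psi_i}M_{x_i,a_i}\otimes N_{y_i,b_i}\ket{\psi_i}$; that identity does not hold as written (conjugation by a non-surjective isometry rescales every probability by $d_i/d$), whereas your identity-padding gets the normalization right.

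However, your third paragraph contains a genuine gap, and it is precisely the point the paper glosses over. You need each $\mathcal{G}_i$ to admit a perfect strategy on the maximally entangled state $\ket{\Phi_{d_i}}$, not merely on some entangled $\ket{\psi_i}$. The mechanism you sketch --- absorb the Schmidt coefficients into one side's measurements via Choi--Jamiolkowski and ``complete the result to a POVM'' --- does not close. Writing $\ket{\psi_i}=(I\otimes L)\ket{\Phi_{d_i}}$, the transformed elements $L^\dagger N_{y_i,b}L$ sum to $L^\dagger L$, which equals $I$ only when $\ket{\psi_i}$ is already maximally entangled; otherwise the completion element $\mathbbm{1}-L^\dagger L$ has expectation $1-\tfrac{1}{d_i}\mathrm{Tr}(L^\dagger L)>0$ on $\ket{\Phi_{d_i}}$, destroying perfection. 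Keeping the original measurements and merely swapping the state also fails in general, since $\sqrt{M}\,D\,\sqrt{N}^{\,T}=0$ for a positive definite diagonal $D$ does not force $\sqrt{M}\,\sqrt{N}^{\,T}=0$. The reduction you invoke is genuine for synchronous games via the tracial characterization, but for general games it requires either an explicit hypothesis that each $\ket{\psi_i}$ is maximally entangled or a real argument. To be fair, the paper's own proof buries the same assumption inside its unproved equation $P^{\mathrm{new}}_i=P^{\mathrm{orig}}_i$; but since you yourself flagged this step as ``the real obstacle,'' you should know that your sketch of it does not go through.
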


    \begin{proof}
        Let $\mathcal{H}_D=\mathbb{C}^d$ be each player's local Hilbert space and let $\ket{\Phi_d}\in \mathcal{H}_D\otimes \mathcal{H}_D$ be a maximally entangled state. For each game $\mathcal{G}_i$, fix isometries 
        \begin{equation}
            U_{A_i}: \mathcal{H}_{A_i}\hookrightarrow \mathcal{H}_D \quad\text{and}\quad U_{B_i}: \mathcal{H}_{B_i}\hookrightarrow \mathcal{H}_D.
        \end{equation}
        Define POVMs on $\mathcal{H}_D$ by 
        \begin{equation}
            \widehat{M}_{x_i,a_i}:=U_{A_i}\,M_{x_i,a_i}\,(U_{A_i})^* \quad\text{and}\quad \widehat{N}_{y_i,b_i}:=U_{B_i}\,N_{y_i,b_i}\,(U_{B_i})^*,
        \end{equation}
        which are valid POVMs because isometries preserve positivity and completeness. 

        With this strategy, playing the game is as follows: upon receiving questions $(x_i,y_i)$, Alice and Bob use the $i$-th POVM families $\{\widehat{M}_{x_i,a_i}\}_{a_i}$ and $\{\widehat{N}_{y_i,b_i}\}_{b_i}$ on the shared state $\ket{\Phi_d}$. The outcome distribution matches exactly that of the original perfect strategy for game $\mathcal{G}_i$, hence the referee accepts with probability $1$ for every question pair. More concretely, for each game $\mathcal{G}_i$, inputs $(x_i,y_i)$, and outcomes $(a_i,b_i)$, let $P^{\mathrm{new}}_i(a_i,b_i\,|\,x_i,y_i) := \big\langle \Phi_d \big|\, \widehat M_{x_i,a_i}\otimes \widehat N_{y_i,b_i} \,\big| \Phi_d \big\rangle$ and $P^{\mathrm{orig}}_i(a_i,b_i\,|\,x_i,y_i):=\big\langle \psi_i \big|\, M_{x_i,a_i}\otimes N_{y_i,b_i} \,\big| \psi_i \big\rangle $. Then, we obtain 
        \begin{equation}\label{eq: prob match} P^{\mathrm{new}}_i(a_i,b_i\,|\,x_i,y_i)= P^{\mathrm{orig}}_i(a_i,b_i\,|\,x_i,y_i).
    \end{equation}
    Let $\lambda_i(x_i,y_i,a_i,b_i)\in\{0,1\}$ be the standard rule function and write 
    \begin{equation}
        \sum_{a_i,b_i} \lambda_i(x_i,y_i,a_i,b_i)\,P^{\mathrm{orig}}_i(a_i,b_i\,|\,x_i,y_i)=1.
    \end{equation}
    Using \eqref{eq: prob match},
    \begin{equation}
        \sum_{a_i,b_i} \lambda_i(x_i,y_i,a_i,b_i)\,P^{\mathrm{new}}_i(a_i,b_i\,|\,x_i,y_i) =\sum_{a_i,b_i} \lambda_i(x_i,y_i,a_i,b_i)\,P^{\mathrm{orig}}_i(a_i,b_i\,|\,x_i,y_i) =1.
    \end{equation}
    i.e., the referee accepts with probability 1 for every input pair. Since the local Hilbert space was $\mathbb{C}^d$, it follows that this strategy uses $\bar{n}$ qubits per player, completing the proof. 
    \end{proof}

    \begin{remark}
        This result highlights a single strategy that can pass any of the $K$ games chosen with provable minimal qubits. In practice, one would use this result to prepare a single entangled state rather than many states for different games. Whatever game $i$ is chosen, one would reroute to that game `block' and win. The advantage to this strategy is that it does not require re-preparing states, no swapping of circuits, or no hardware reconfiguration per game.
    \end{remark}

    \begin{example}
        We take 2 NLGs, each with perfect strategies and embed them into a common Hilbert space. Let $\mathcal G_1$ be the Mermin magic square game (MSG) and $\mathcal G_2$ the magic rectangle game (MRG)~\cite{Mermin1990b}. Each game admits a perfect quantum strategy on $n_1=2$ and $n_2=3$ qubits per player, respectively. We construct a common strategy on $\bar n=\max\{n_1,n_2\}=3$ qubits per player that perfectly plays either game.

        Let each player's local Hilbert space be
        \begin{equation}
            \mathcal H_D = \mathbb C^d \cong (\mathbb C^2)^{\otimes 3}, \quad\text{where}\quad d = 2^{\bar n}=8.
        \end{equation}
        The players share the maximally entangled state
        \begin{equation}
            \ket{\Phi_8}
                = \frac{1}{\sqrt{8}} \sum_{j=0}^7 \ket{j}\otimes\ket{j}
                = \ket{\Phi_2}^{\otimes 3},\quad\text{where}\quad
                \ket{\Phi_2} = \tfrac{1}{\sqrt2}(\ket{00}+\ket{11}).
        \end{equation}
        Label Alice’s qubits $(q_1,q_2,q_3)$ and Bob’s $(q_1',q_2',q_3')$.
    
        If the referee samples and selects the MRG, play the game as usual. Since this game requires the maximal amount of qubits, the embeddings for this game are identities. Hence $\widehat{M}_{x,a}=M_{x,a}$ and $\widehat{N}_{y,b}=N_{y,b}$.
    
        If the referee samples and selects the MSG, embed $\mathcal H_{A_1}$ and $\mathcal H_{B_1}$ into $\mathcal H_D$ via
        \begin{equation}
            U_{A_1}(\ket{\alpha})=\ket{\alpha}\otimes\ket{0},
            \quad\text{and}\quad
            U_{B_1}(\ket{\beta})=\ket{\beta}\otimes\ket{0}.
        \end{equation}
        Then, the lifted POVMs are the same with an additional tensor product with the identity operator making $\widehat{M}_{x_1,a_1} = M_{x_1,a_1}\otimes \mathbbm{1}$ and similarly for $\widehat{N}_{y_1,b_1}$. 
    \end{example}

\section{Cartan Extrapolation}\label{sec: Cartan}
    This section develops a Lie-algebraic lens for identifying and exploring structures across multiple games. The key idea is that Cartan decompositions of $\mathfrak{su}(2^n)$ are recursive~\cite{khaneja2001cartan} and this exposes canonical, `interaction sectors' in which families of measurement algebras can be simultaneously aligned. Moreover, since dimensions of Lie algebras are used as a proxy for expressiveness~\cite{larocca2022group}, we can use this line of thought to understand the entanglement reduction from the Cartan perspective.

\subsection{Cartan preliminaries}
    Recall $\mathfrak{su}(2^n)$ is a Lie algebra comprised of $2^n\times 2^n$ traceless skew-Hermitian matrices. 
    A \emph{Cartan decomposition} of a compact real semisimple Lie algebra $\mathfrak{g}$ is a vector space decomposition
    \begin{equation}\label{eq:Cartan-decomp}
        \mathfrak{g}\;=\;\mathfrak{k}\  \oplus\ \mathfrak{p}
    \end{equation}
   and bracket relations
    \begin{equation*}
        [\mathfrak{k},\mathfrak{k}]\subseteq \mathfrak{k},\qquad
        [\mathfrak{k},\mathfrak{p}]\subseteq \mathfrak{p},\qquad
        [\mathfrak{p},\mathfrak{p}]\subseteq \mathfrak{k}.
    \end{equation*}
    Here, $\mathfrak{k}$ is the Lie subalgebra fixed by an involution $\theta:\mathfrak{g}\to\mathfrak{g}$ so that 
    \begin{equation*}
        \mathfrak{k}= \{X\in \mathfrak{g}: \theta(X)=X \}.
    \end{equation*}
    From this, then $\mathfrak{p}$ is simply the $-1$ eigenspace of $\theta$. That is 
    \begin{equation}
        \mathfrak{p}= \{X\in \mathfrak{g}: \theta(X)= -X \}.
    \end{equation}
    A \emph{Cartan subalgebra} (CSA) $\mathfrak{h}\subset\mathfrak{g}$ is a maximal abelian subalgebra consisting of semisimple elements. For a symmetric pair $(\mathfrak{g},\mathfrak{k})$, a \emph{maximal abelian} subspace $\mathfrak{a}\subset\mathfrak{p}$ plays a central role in the Cartan ($KAK$) factorization. Namely, 
    \begin{equation}\label{eq:KAK}
        G\;=\;K\,\exp(\mathfrak{a})\,K, \quad\text{where}\quad K:=\exp(\mathfrak{k}) \quad\text{and}\quad G:=\exp(\mathfrak{g}).
    \end{equation}
    Intuitively, the component $\exp(\mathfrak{a})$ captures the non-local (or ``interaction'') degrees of freedom.
    \begin{example}
        For $\mathfrak{g}=\mathfrak{su}(4)$, the standard Cartan decomposition aligned with the local subgroup $K \cong \mathrm{SU}(2)\otimes \mathrm{SU}(2)$ yields
    \begin{equation}\label{eq:su4-cartan}
        \mathfrak{k}\;=\;\mathrm{span}\{X\otimes \mathbbm{1},\ Y\otimes \mathbbm{1},\ Z\otimes \mathbbm{1},\ \mathbbm{1}\otimes X,\ \mathbbm{1}\otimes Y,\ \mathbbm{1}\otimes Z\},\quad
        \mathfrak{a}\;=\;\mathrm{span}\{X\otimes X,\ Y\otimes Y,\   Z\otimes Z\},
    \end{equation}
    where $X,Y,Z$ are the 2X2 Pauli operators. 
    Any $U\in \mathrm{SU}(4)$ admits $U=(k_1\otimes k_2)\,\exp(c_x X\otimes X+c_y Y\otimes Y+c_z Z\otimes Z)\,(k_3\otimes k_4)$, with canonical parameters $(c_x,c_y,c_z)$. This separates “local rotations” from the three entangling directions. See~\cite{khaneja2001cartan} for further details on how to recursively construct $\mathfrak{su}(2^n)$ from this example. 
    \end{example}

    Recall the adjoint operator $\mathrm{Ad}$ definition. 
    \begin{definition}
        Let $G$ be a Lie group with Lie algebra $\mathfrak{g}$. The adjoint representation of $G$ on $\mathfrak{g}$ is defined as
        \begin{equation}
            \mathrm{Ad}_g(X)\coloneqq \left.\frac{d}{dt}\right|_{t=0} ge^{tX}g^{-1} = gXg^{-1},
        \end{equation}
        for $g\in G$ and $X\in\mathfrak{g}$. When $G=SU(2^n)$, this reduces to 
        \begin{equation}
            \mathrm{Ad}_U(X) = UXU^\dagger.
        \end{equation}
        \end{definition}
    For a deeper understanding of Lie algebras and their Cartan decompositions, we refer the reader to \cite{hall2013lie}. 
    
    \subsection{Game algebras and Cartan sectors}
    Let $\mathcal{A}_i$ and $\mathcal{B}_i$ be the C*-algebras generated by the perfect strategy measurements of game $\mathcal{G}_i$, and let their skew-Hermitian Lie closures (i.e., the Lie algebra generated by the game observables) be denoted as 
    \begin{equation}\label{eq: game lie algebra}
        \mathfrak{g}_i^A = \mathrm{Lie}\{iM\, : M\in\mathcal{A}_i, \, M^*=M\}\subseteq\mathfrak{su}(2^{n_i}),
    \end{equation}
    and define $\mathfrak{g}_i^B$ similarly. Note that $\mathrm{Lie}\{\cdot\}$ here denotes the smallest Lie algebra containing the given set, which is equivalent to the set closed under commutators and linear combinations. See~\cite{hall2013lie} for more details on Lie constructions. 

    After embedding into a larger $\mathbb{C}^{2^n}$ space as in \Cref{sec: dependent}, we consider embeddings by assuming there exists $n$ and injective $*$-homomorphisms (commuting embeddings)
    \begin{equation}\label{eq: embeddings}
        \iota_i^A:\mathcal{A}_i\hookrightarrow \mathcal{B}(\mathbb{C}^{2^n}) \quad\text{and}\quad \iota_i^B:\mathcal{B}_i\hookrightarrow \mathcal{B}(\mathbb{C}^{2^n}),
    \end{equation}
    such that for all $i\neq j$ the images commute,
    $[\iota_i^A(\mathcal{A}_i),\iota_j^A(\mathcal{A}_j)]=0$ and
    $[\iota_i^B(\mathcal{B}_i),\iota_j^B(\mathcal{B}_j)]=0$.
    Such an assumption allows these finite-dimensional game algebras to embed into a single matrix algebra, chosen so that the images for different games commute.  
    From this, we then consider $\iota_i^A(\mathfrak{g}_i^A),\iota_i^B(\mathfrak{g}_i^B)\subseteq \mathfrak{su}(2^n)$. 

    The following results provide an algebraic certification for when operator algebras arising from multiple NLGs can be realized inside a single Hilbert space with commuting embeddings. In particular, it shows that if each game Lie algebra can be aligned into a common Cartan decomposition, then all the games can be played in parallel on $n$ qubits per player. This connects the structural properties of $\mathfrak{su}(2^n)$ and its Cartan decomposition with the existence of jointly measurable products of POVMs and the shared entangled state. 
    \begin{theorem}\label{theorem: embedding existence}        
        Let $\iota_i^A:\mathcal A_i\hookrightarrow\mathcal B(\mathbb C^{2^n})$ and  $\iota_i^B:\mathcal B_i\hookrightarrow\mathcal B(\mathbb C^{2^n})$ be the embeddings in \Cref{eq: embeddings}. Write $\iota_i^A(\mathfrak g_i^A),\iota_i^B(\mathfrak g_i^B)\subset\mathfrak{su}(2^n)$ for the corresponding embedded Lie algebras. Suppose there exists a Cartan decomposition $\mathfrak{su}(2^n)=\mathfrak{k}\oplus\mathfrak{p}$ with a maximal abelian subspace  $\mathfrak{a}\subset \mathfrak{p}$ and elements $k_i^\mathrm{A},k_i^\mathrm{B}\in K$ from \Cref{eq:KAK} such that for all $i$,
        \begin{equation}\label{eq:cartan-align}     
            \mathrm{Ad}_{k_i^\mathrm{A}}\left(\iota_i^A(\mathfrak{g}_i^A)\right)\ \subseteq\ \mathfrak{c}_A\oplus \mathrm{span}\,\mathfrak{a}\eqqcolon \mathfrak{k}_A,\qquad \mathrm{Ad}_{k_i^\mathrm{B}}\left(\iota_i^B(\mathfrak{g}_i^B)\right)\ \subseteq\ \mathfrak{c}_B\oplus \mathrm{span}\,\mathfrak{a}\eqqcolon \mathfrak{k}_B,
        \end{equation}
        where $\mathfrak{c}_A,\mathfrak{c}_B$ are fixed abelian subalgebras of $\mathfrak{su}(2^n)$ independent of $i$.
        Then, there exists an injective $*$-homomorphism 
        \begin{equation}
            \widehat{\iota}_i^A: \mathcal{A}_i \hookrightarrow \mathcal{B}(\mathbb{C}^{2^n})\quad\text{and}\quad \widehat{\iota}_i^B:\mathcal{B}_i\hookrightarrow \mathcal{B}(\mathbb{C}^{2^n})
        \end{equation}
        such that their images lie in the abelian $C*$-algebras $\mathcal{T}_A\coloneqq C^*(\mathrm{exp}(\mathfrak{k}_A))$ and $\mathcal{T}_B\coloneqq C^*(\mathrm{exp}(\mathfrak{k}_B))$, hence commute element-wise across all $i$. 
    \end{theorem}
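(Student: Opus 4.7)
My plan is to take the alignment witnesses $k_i^A,k_i^B\in K$ from \Cref{eq:KAK} as the conjugations that rotate each embedded algebra into the common abelian sector, and set
\begin{equation*}
\widehat{\iota}_i^A(a)\;:=\;k_i^A\,\iota_i^A(a)\,(k_i^A)^*, \qquad \widehat{\iota}_i^B(b)\;:=\;k_i^B\,\iota_i^B(b)\,(k_i^B)^*.
\end{equation*}
Because each $k_i^A,k_i^B\in K\subset \mathrm{SU}(2^n)$ is unitary, $\mathrm{Ad}_{k_i^A}$ and $\mathrm{Ad}_{k_i^B}$ are $*$-automorphisms of $\mathcal B(\mathbb C^{2^n})$; composing an injective $*$-homomorphism with a $*$-automorphism preserves injectivity and the $*$-structure, so both $\widehat{\iota}_i^A$ and $\widehat{\iota}_i^B$ are injective $*$-homomorphisms, which settles the first claim.

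For the containment $\widehat{\iota}_i^A(\mathcal A_i)\subseteq\mathcal T_A$ I would use that in finite dimensions $\mathcal A_i$ is the complex span of its self-adjoint part, and by \Cref{eq: game lie algebra} the skew-Hermitian elements of $\iota_i^A(\mathcal A_i)$ coincide with $\iota_i^A(\mathfrak g_i^A)$. For any self-adjoint $a\in\iota_i^A(\mathcal A_i)$ the element $ia$ lies in $\iota_i^A(\mathfrak g_i^A)$, so \Cref{eq:cartan-align} forces $\mathrm{Ad}_{k_i^A}(ia)\in\mathfrak k_A$. Now $\mathcal T_A=C^*(\exp(\mathfrak k_A))$ is a closed, finite-dimensional complex subspace of $\mathcal B(\mathbb C^{2^n})$ containing the smooth curve $t\mapsto e^{tX}$ for every $X\in\mathfrak k_A$, so differentiating at $t=0$ gives $\mathfrak k_A\subseteq\mathcal T_A$. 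Multiplying by $-i$ and then decomposing a general $a\in\iota_i^A(\mathcal A_i)$ into its self-adjoint and skew-Hermitian parts yields $\widehat{\iota}_i^A(\mathcal A_i)\subseteq\mathcal T_A$; the $B$-side is identical.

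The last step is the abelianness of $\mathcal T_A$ and $\mathcal T_B$, which is what produces the cross-index commutation and is also the main obstacle of the proof. The essential structural input is that $\mathfrak k_A=\mathfrak c_A\oplus\mathrm{span}\,\mathfrak a$ is abelian as a Lie subalgebra of $\mathfrak{su}(2^n)$: $\mathfrak c_A$ is abelian by hypothesis and $\mathfrak a$ is abelian as a maximal abelian subspace of $\mathfrak p$, but one further needs $[\mathfrak c_A,\mathfrak a]=0$, most naturally ensured by taking $\mathfrak c_A$ inside the centralizer of $\mathfrak a$ in $\mathfrak k$. Under this condition $\exp(\mathfrak k_A)$ is a commuting family of unitaries since $e^X e^Y=e^{X+Y}$ whenever $[X,Y]=0$, the $C^*$-algebra $\mathcal T_A$ they generate is abelian, and likewise for $\mathcal T_B$. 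All of the images $\widehat{\iota}_i^A(\mathcal A_i)$ then sit inside a single abelian algebra and therefore commute element-wise across $i$, with the analogous statement on the $B$-side. In a careful write-up I would add the commutations $[\mathfrak c_A,\mathfrak a]=[\mathfrak c_B,\mathfrak a]=0$ explicitly as hypotheses (or fold them into the meaning of \emph{Cartan-aligned}), since without them $\mathfrak k_A,\mathfrak k_B$ need not close under the bracket and the target algebras need not be abelian; everything else is bookkeeping with $\mathrm{Ad}$ and the standard Lie-algebra/$C^*$-algebra correspondence.
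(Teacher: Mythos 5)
Your proposal follows essentially the same route as the paper's proof: define $\widehat{\iota}_i^{A}=\mathrm{Ad}_{k_i^{A}}\circ\iota_i^{A}$ (likewise for Bob), note that conjugation by a unitary preserves the injective $*$-homomorphism structure, and conclude that all images land in the abelian algebras $\mathcal{T}_A,\mathcal{T}_B$ and hence commute across $i$. Two points where you are more careful are worth recording. First, the paper simply asserts $\mathrm{range}(\widehat{\iota}_i^{A})\subset\mathcal{T}_A$, whereas you justify it by decomposing elements of $\mathcal{A}_i$ into self-adjoint parts, passing through $\mathfrak{g}_i^A$ and \Cref{eq:cartan-align}, and observing $\mathfrak{k}_A\subseteq\mathcal{T}_A$ by differentiating $t\mapsto e^{tX}$. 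Second, and more substantively, you correctly flag that the abelianness of $\mathfrak{k}_A=\mathfrak{c}_A\oplus\mathrm{span}\,\mathfrak{a}$ requires $[\mathfrak{c}_A,\mathfrak{a}]=0$: the paper argues via the componentwise bracket $[(h_1,c_1),(h_2,c_2)]=([h_1,h_2],[c_1,c_2])$, which is valid for an \emph{external} direct sum of Lie algebras but not for an internal vector-space direct sum inside $\mathfrak{su}(2^n)$, where the cross terms $[h_1,c_2]+[c_1,h_2]$ survive. Your suggestion to add $[\mathfrak{c}_A,\mathfrak{a}]=[\mathfrak{c}_B,\mathfrak{a}]=0$ as an explicit hypothesis (e.g., by taking $\mathfrak{c}_A,\mathfrak{c}_B$ in the centralizer of $\mathfrak{a}$) repairs a real gap in the paper's own argument.
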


    \begin{proof}
        Since $\mathfrak{c}_A$ and $\mathfrak{c}_B$ are abelian, it follows that $\mathfrak{c}_A\oplus \mathrm{span}\,\mathfrak{a}$ and $\mathfrak{c}_B\oplus \mathrm{span}\,\mathfrak{a}$ are also abelian. This follows from how the bracket is defined for direct sums. i.e., for any $h_1,h_2\in \mathfrak{c}_A$ and any $c_1,c_2\in \mathrm{span}\,\mathfrak{a}$, it follows that $[(h_1,c_1),(h_2,c_2)]=\left([h_1,h_2],[c_1,c_2]\right)$. Exponentiation of an abelian set preserves the commutativity structure, thus $\mathcal{T}_A$ and $\mathcal{T}_B$ are abelian. Let 
        \begin{equation}
            \widehat\iota_i^A := \mathrm{Ad}_{k_i^{\mathrm A}}\circ \iota_i^A \quad\text{and}\quad \widehat\iota_i^B \;:=\; \mathrm{Ad}_{k_i^{\mathrm B}}\circ \iota_i^B.
        \end{equation}
        Since $\iota_A$ and $\iota_B$ are already injective $*$-homomorphisms, it follows that conjugation by a unitary $k_i^A$ or $k_i^B$ preserves multiplication, adjoints, and norms. Thus $\widehat\iota_i^A$ and $\widehat\iota_i^B$ are injective $*$-homomorphisms with $\mathrm{range}(\widehat\iota_i^A)\subset\mathcal{T}_A$ and $\mathrm{range}(\widehat\iota_i^B)\subset\mathcal{T}_B$. This concludes that for $i\neq j$, these embeddings also commute, that is $[\widehat\iota_i^A(\mathfrak{g}^A_i),\widehat\iota_i^B(\mathfrak{g}^B_j)]=0$. This proves the claimed commuting embeddings across all games.
    \end{proof}

    The following result guarantees a qubit reduction.  

    \begin{theorem}\label{theorem: qubit reduction}
        As before, let $\{\mathcal{G}_i\}_{i=1}^K$ be a collection of K non-local games, each with perfect strategies requiring $n_i$ qubits. Let $\mathfrak{g}_i^A$ and $\mathfrak{g}_i^B$ be the Lie algebras generated from the game C*algebras as in \Cref{eq: game lie algebra} with the embeddings $\iota_i^A(\mathfrak g_i^A),\iota_i^B(\mathfrak g_i^B)\subset \mathfrak{su}(2^n)$. Let $k_i^{\mathrm A},k_i^{\mathrm B}\in SU(2^n)$ be such that the images $\mathrm{Ad}_{k_i^{\mathrm A}}\big(\iota_i^A(\mathfrak g_i^A)\big)$ and $\mathrm{Ad}_{k_i^{\mathrm B}}\big(\iota_i^B(\mathfrak g_i^B)\big)$ satisfy 
        \begin{equation}\label{eq: big Lie assumption}            
            \mathrm{dim}\left(\mathrm{span}\left\{\mathrm{Ad}_{k_i^{\mathrm A}}\big(\iota_i^A(\mathfrak g_i^A)\big)\right\}_i\right)<r_A\quad\text{and}\quad \mathrm{dim}\left(\mathrm{span}\left\{\mathrm{Ad}_{k_i^{\mathrm B}}\big(\iota_i^B(\mathfrak g_i^B)\big)\right\}_i\right)<r_B,
        \end{equation}
        where $r_A:=\mathrm{dim}(\mathfrak k_A)$ and $r_B:=\mathrm{dim}(\mathfrak k_B)$, for some abelian subalgebras $\mathfrak k_A,\mathfrak k_B\subset \mathfrak{su}(2^n)$. Assume for the given qubit counts per game $n_1,\dots,n_K>1$ we have that 
        \begin{equation}\label{eq: max rarb}
            \max\{r_A,r_B\}<\sum\limits_{i=1}^K (2^{n_i}-1).
        \end{equation}
        Let $N\coloneqq\sum\limits_{i=1}^K n_i$ and let $n=\min\{m\in\mathbb{N}: 2^m-1\geq r\}$, where $r=\max\{r_A,r_B\}$. If $K\geq 2$, then $n<N$.
    \end{theorem}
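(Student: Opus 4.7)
The plan is to reduce the claim $n < N$ to a concrete arithmetic inequality on the integers $n_1,\dots,n_K$, and then verify that inequality by induction on $K$. Nothing from the Lie-algebraic setup is actually used beyond the dimension bound \Cref{eq: max rarb}; the theorem is really a statement about how linear (additive) sums lose to multiplicative dimension growth.

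First, I would unpack the definition of $n$. By construction $n$ is the smallest nonnegative integer with $2^n - 1 \geq r$, so the inequality $n < N$ is equivalent to $2^{N-1} - 1 \geq r$. Because $r$ is an integer dimension and the hypothesis \Cref{eq: max rarb} gives the \emph{strict} inequality $r < \sum_{i=1}^K (2^{n_i} - 1)$, we can sharpen it to
$$r \;\leq\; \sum_{i=1}^K 2^{n_i} - K - 1.$$
It therefore suffices to establish $\sum_{i=1}^K 2^{n_i} \leq 2^{N-1} + K$ under the standing hypotheses $K \geq 2$ and $n_i \geq 2$. Writing $a_i := 2^{n_i} \geq 4$ and using $\prod_i a_i = 2^N$, this is equivalent to the multiplicative-versus-additive estimate
$$\prod_{i=1}^K a_i \;\geq\; 2\sum_{i=1}^K a_i - 2K.$$

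I would prove the latter by induction on $K$. The base case $K = 2$ reduces to $(a_1 - 2)(a_2 - 2) \geq 0$, which is immediate from $a_i \geq 4$. For the inductive step, multiplying the inductive hypothesis through by $a_{K+1} \geq 4$ and rearranging reduces the desired inequality for $K+1$ to
$$2\,(a_{K+1} - 1)\,\Bigl(\sum_{i=1}^K a_i - (K+1)\Bigr) \;\geq\; 0,$$
which holds since $a_{K+1} > 1$ and $\sum_{i=1}^K a_i \geq 4K \geq K+1$. Conceptually the content is modest: the additive baseline $\sum 2^{n_i}$ grows linearly in $K$ while $2^N = \prod 2^{n_i}$ grows multiplicatively, so the Lie-algebra dimension ceiling $r$, already kept below the additive baseline by hypothesis, sits strictly below $2^{N-1}$ as well. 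The only subtlety is tracking the integer strictness to extract the $-1$ needed for the comparison $r \leq 2^{N-1} - 1$; note that the assumption $n_i > 1$ is essential, as for single-qubit games we would have $a_i = 2$ and the product would cease to dominate the sum (the inductive step would fail at the factor $a_{K+1} - 1$ being too small relative to $K$).
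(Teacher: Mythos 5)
Your proof is correct, and it takes a genuinely different --- and in fact more careful --- route than the paper's. The paper proves the weaker estimate $\sum_{i=1}^K 2^{n_i}\le 2^N$ by induction (using only $n_i\ge 1$), deduces $r+1<2^N$, and then concludes by writing $n=\log_2(r+1)<N$. That last identification is not justified: since $n=\min\{m:2^m\ge r+1\}=\lceil\log_2(r+1)\rceil$, the bound $r+1<2^N$ only yields $n\le N$, and ruling out $n=N$ requires exactly the statement you isolate, namely $r\le 2^{N-1}-1$. You obtain this by exploiting the integrality of the strict hypothesis in \Cref{eq: max rarb} to get $r\le\sum_i 2^{n_i}-K-1$, and then proving the sharper inequality $\sum_i a_i\le\tfrac12\prod_i a_i+K$ for $a_i=2^{n_i}\ge 4$, whose base case $(a_1-2)(a_2-2)\ge 0$ and inductive step $2(a_{K+1}-1)\bigl(\sum_{i\le K}a_i-(K+1)\bigr)\ge 0$ both check out. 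So your argument buys a complete proof of the strict inequality, closing a step the paper's own proof leaves open, at the cost of genuinely invoking $n_i\ge 2$ (which the paper's induction never needs). One small inaccuracy in your closing aside: the hypothesis $n_i>1$ is not actually essential to your induction --- with $a_i=2$ the base case reads $0\ge 0$ and the step needs $\sum_{i\le K}a_i\ge K+1$, i.e.\ $2K\ge K+1$, which still holds --- so the arithmetic conclusion would survive for single-qubit games as well; this does not affect the validity of your proof under the stated hypotheses.
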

    
    \begin{proof}
        Recall the rank of $\mathfrak{su}(2^n)$ is $2^n-1$. This implies that any abelian subalgebra has dimension $\leq 2^n-1$, and hence $r\leq 2^n-1$. By the definition of $n$, it follows that 
        \begin{equation}
            2^{n-1}-1\leq r\leq 2^n-1 \iff 2^{n-1}\leq r+1\leq 2^n. 
        \end{equation}
        
        We claim that 
        \begin{equation}
            \sum\limits_{i=1}^K2^{n_i}\leq 2^{\sum\limits_{i=1}^Kn_i}=2^N,
        \end{equation}
        and we show this via induction. For the base case, when $K=1$, there is equality. If we assume the hypothesis is true for $K-1$ and that 
        \begin{equation*}
            \sum\limits_{i=1}^{K-1}2^{n_i}\leq 2^{\sum\limits_{i=1}^{K-1}n_i},
        \end{equation*}
        since $n_K\geq 1$, it follows that 
        \begin{align}
            2^N =  2^{\sum\limits_{i=1}^Kn_i} 
            & = \left(2^{\sum\limits_{i=1}^{K-1}n_i}\right)2^{n_K} \\
            & \geq  \left(\sum\limits_{i=1}^{K-1}2^{n_i}\right)2^{n_K} \\
            & \geq \left(\sum\limits_{i=1}^{K-1}2^{n_i}\right) + 2^{n_K} \\
            & = \sum\limits_{i=1}^{K}2^{n_i}.
        \end{align}
        Then in particular, if $K\geq 2$, then we also have 
        \begin{align}
            \sum\limits_{i=1}^K\left(2^{n_i}-1\right) +1 
            & = \sum\limits_{i=1}^K 2^{n_i} - K+1 \\ 
            & \leq 2^N - K+1 \\
            & < 2^N, \quad\text{whenever } K\geq 2.
        \end{align}
        By assumption, $r\leq \sum\limits_{i=1}^K(2^{n_i}-1)$ implies $r+1\leq \sum\limits_{i=1}^K(2^{n_i}-1) +1$. So by monotonicity of the $\log$ function, we have 
        \begin{align}
            n = \log_2(r+1) 
            & \leq \log_2\left(\sum\limits_{i=1}^K\left(2^{n_i}-1\right) +1 \right) \\
            & < \log_2(2^N) = N,
        \end{align}
        which completes the proof. 
    \end{proof}

    The assumption from \Cref{eq: big Lie assumption} is essential for the dimension reduction argument. It guarantees that the collection of Lie algebras Cartan components coming from the players fit inside a fixed abelian subalgebra of dimension $r_A$ (resp.\ $r_B$). Such an alignment into the same abelian subalgebra of bounded dimension produces an overlap, which causes the reduction. 

    In what comes next is a sequence of results to guarantee the state used in \Cref{theorem: all games} is guaranteed to exist and be entangled. Fix embeddings $\iota_i^A:\mathcal{A}_i\hookrightarrow\mathcal{B}(\mathbb{C}^{2^n})$ and  $\iota_i^B:\mathcal{B}_i\hookrightarrow\mathcal{B}(\mathbb{C}^{2^n})$ as in~\Cref{eq: embeddings}, with the pairwise commuting ranges across $i$. For each game $\mathcal{G}_i$ and question pair $(x_i,y_i)$, define the \emph{acceptance operator}
    \begin{equation}
        W_i(x_i,y_i)\coloneqq\sum_{\substack{(a_i,b_i):\\\lambda(x_i,y_i,a_i,b_i)=1}} \iota_i^A\big(M_{x_i,a_i}\big)\ \otimes\ \iota_i^B\big(N_{y_i,b_i}\big).
    \end{equation}
    Each $W_i(x_i,y_i)$ is a positive and equals the identity on any perfect strategy state for $\mathcal{G}_i$.

    \begin{remark}
        Under~\Cref{theorem: embedding existence}, the Lie algebra per game aligns into a common abelian $C^*$-algebra for each player. In that Cartan space, all $W_i(x_i,y_i)$ are simultaneously diagonal, so each is a projector onto a set of basis indices. Let $\{|e_j^A\rangle\}_{j=1}^{d}$ and $\{|e_\ell^B\rangle\}_{\ell=1}^{d}$ be the corresponding orthonormal eigenbases. 
    \end{remark}

    \begin{definition}[Common Winning Sector (CWS)]
        Let $\mathcal{S}_{i,(x_i,y_i)}$ be the set of diagonal basis indices on which $W_i(x_i,y_i)$ has eigenvalue $1$. The \emph{Common Winning Sector} is
        \begin{equation}
            \mathrm{CWS}\coloneqq\bigcap_{i=1}^K\ \bigcap_{(x_i,y_i)} \mathcal{S}_{i,(x_i,y_i)}.
        \end{equation}
    \end{definition}
    The following proves when the CWS is nonempty. Equivalently, $\mathrm{CWS}\neq\emptyset$ iff $\bigcap\limits_{i,(x_i,y_i)} \mathrm{Ran}\,W_i(x_i,y_i)\neq\{0\}$. 
    \begin{lemma}\label{lemma: nonempty CWS}
        If $\iota_i^{A/B}$ act on disjoint tensor factors, then for perfect per-game states $|\psi_i\rangle$, the product $|\Psi\rangle=\bigotimes\limits_i|\psi_i\rangle$ lies in $\bigcap\limits_{i,(x_i,y_i)}\mathrm{Ran}\,W_i(x_i,y_i)$.
    \end{lemma}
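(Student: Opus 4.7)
The plan is to reduce the claim to a per-game statement by exploiting the disjoint-tensor-factor structure and then invoking the standard fact that a positive operator bounded by the identity which saturates its expectation on a state must fix that state. First I would record the two structural properties of the acceptance operators: positivity of each $W_i(x_i,y_i)$ follows from it being a sum of positive tensor products $\iota_i^A(M_{x_i,a_i})\otimes\iota_i^B(N_{y_i,b_i})$, and the bound $W_i(x_i,y_i)\leq\mathbb{1}$ follows from completing to the full POVM identity $\sum_{(a_i,b_i)}\iota_i^A(M_{x_i,a_i})\otimes\iota_i^B(N_{y_i,b_i})=\mathbb{1}$, where I use that the injective $*$-homomorphisms $\iota_i^{A/B}$ send POVMs to POVMs.

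Next, I would translate the perfect-strategy hypothesis into an eigenvector statement. The assumption that $|\psi_i\rangle$ is a perfect strategy state for $\mathcal{G}_i$ gives $\langle\psi_i|W_i(x_i,y_i)|\psi_i\rangle=1$ for every question pair. Combined with $\mathbb{1}-W_i(x_i,y_i)\geq 0$, this yields $\langle\psi_i|(\mathbb{1}-W_i(x_i,y_i))|\psi_i\rangle=0$, which by the standard argument for positive operators forces $(\mathbb{1}-W_i(x_i,y_i))|\psi_i\rangle=0$. Hence $|\psi_i\rangle$ lies in the eigenvalue-$1$ eigenspace of $W_i(x_i,y_i)$, which is in particular a subspace of $\mathrm{Ran}\,W_i(x_i,y_i)$.

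The third step is the tensor-factor bookkeeping. Under the hypothesis that the $\iota_i^{A/B}$ act on disjoint tensor factors, I would write Alice's Hilbert space as $\mathbb{C}^{2^n}\cong\bigotimes_j V_j^A$ and similarly for Bob, with $\iota_i^A(\mathcal{A}_i)$ supported only on the $i$-th factor and likewise $\iota_i^B(\mathcal{B}_i)$ supported only on Bob's $i$-th factor. After the natural reshuffle that pairs each $V_j^A$ with $V_j^B$, each operator $W_i(x_i,y_i)$ factors as $\mathbb{1}_{\neq i}\otimes\widetilde{W}_i(x_i,y_i)$, with $\widetilde{W}_i(x_i,y_i)$ acting on the pair $V_i^A\otimes V_i^B$ where $|\psi_i\rangle$ lives. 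Applying $W_i(x_i,y_i)$ to $|\Psi\rangle=\bigotimes_j|\psi_j\rangle$ then gives
\begin{equation*}
    W_i(x_i,y_i)|\Psi\rangle \;=\; \Bigl(\bigotimes_{j\neq i}|\psi_j\rangle\Bigr)\otimes \widetilde{W}_i(x_i,y_i)|\psi_i\rangle \;=\; |\Psi\rangle,
\end{equation*}
using step two in the inner factor. Taking the intersection over all $i$ and $(x_i,y_i)$ gives $|\Psi\rangle\in\bigcap_{i,(x_i,y_i)}\mathrm{Ran}\,W_i(x_i,y_i)$, as claimed.

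I expect the main obstacle to be notational rather than mathematical: formally specifying the tensor reshuffle that identifies the bipartite factors so that each $|\psi_i\rangle$ is actually a state on the matched pair $V_i^A\otimes V_i^B$, and verifying that the disjoint-factor hypothesis on the $\iota_i^{A/B}$ really allows $W_i(x_i,y_i)$ to be written in the claimed product form. The operator-theoretic content, namely that $\langle\psi|W|\psi\rangle=1$ together with $W\leq\mathbb{1}$ forces $W|\psi\rangle=|\psi\rangle$, is elementary and transfers factor-by-factor once the bookkeeping is pinned down.
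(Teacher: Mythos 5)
Your proposal is correct and follows essentially the same route as the paper's proof: factor each acceptance operator as $W_i = \mathbbm{1}_{\neq i}\otimes\widetilde{W}_i$ under the disjoint-factor hypothesis, use $\langle\psi_i|\widetilde{W}_i|\psi_i\rangle=1$ together with $0\le\widetilde{W}_i\le\mathbbm{1}$ and the positive-square-root argument to get $\widetilde{W}_i|\psi_i\rangle=|\psi_i\rangle$, and then apply this factor-by-factor to the product state. The only (self-acknowledged) blemish is that in your second step you write $\langle\psi_i|W_i|\psi_i\rangle$ before introducing the factorization, which should be stated for $\widetilde{W}_i$ on $\mathcal{H}_{A_i}\otimes\mathcal{H}_{B_i}$ as you do implicitly in step three.
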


    \begin{proof}
         Fix $i\in\{1,\dots,K\}$ and a question pair $(x_i,y_i)$ of game $\mathcal{G}_i$. Under the disjoint-factor hypothesis, there exist positive operators 
         \begin{equation}
             \widetilde{W}_i(x_i,y_i) \coloneqq \sum\limits_{\substack{(a_i,b_i): \\ \lambda(x_i,y_i,a_i,b_i)=1}} M_{x_i,a_i}\otimes N_{y_i,b_i} \quad\text{on}\quad \mathcal{H}_{A_i}\otimes\mathcal{H}_{B_i}
         \end{equation}
         such that 
         \begin{equation}
             W_i(x_i,y_i) = \Big(\bigotimes_{j<i} \mathbbm{1}_{A_j}\otimes \mathbbm{1}_{B_j}\Big)  \otimes  \widetilde{W}_i(x_i,y_i)  \otimes \Big(\bigotimes_{j>i} \mathbbm{1}_{A_j}\otimes \mathbbm{1}_{B_j}\Big).
         \end{equation}
         Since $\{M_{x_i,a_i}\},\{N_{y_i,b_i}\}$ together with $|\psi_i\rangle$ form a perfect strategy for $\mathcal{G}_i$, we have for every $(x_i,y_i)$ that 
         \begin{equation}
             \big\langle \psi_i\big| \widetilde{W}_i(x_i,y_i)\,\big|\psi_i\big\rangle \;=\; 1, \qquad 0 \;\le\; \widetilde{W}_i(x_i,y_i) \;\le\; \mathbbm{1}.
         \end{equation}
         As before, let $X:=\mathbbm{1}-\widetilde{W}_i(x_i,y_i)\ge 0$. Then
         \begin{equation}
             \|X^{\frac{1}{2}}\ket{\psi_i}\|^2 = \langle\psi_i|X|\psi_i\rangle = 1- \langle\psi_i|\widetilde{W}_i(x_i,y_i)|\psi_i\rangle = 0. 
         \end{equation}
         Hence $X^{\frac{1}{2}}\ket{\psi_i}=0$ and thus $X\ket{\psi_i}=0$. It follows that $\widetilde{W}_i(x_i,y_i)|\psi_i\rangle =\ket{\psi_i}$. 

         Consider a global product state $\ket{\Psi}=\bigotimes\limits_{j=1}^K\ket{\psi_j}$. Using the tensor form of $W_i(x_i,y_i)$, we obtain
         \begin{equation}
             W_i(x_i,y_i)\ket{\Psi} = \left(\bigotimes\limits_{j<i}\ket{\psi_j} \right) \otimes (\widetilde{W}_i(x_i,y_i)|\psi_i\rangle) \otimes \left(\bigotimes\limits_{j>i}\ket{\psi_j} \right) =\ket{\Psi}.
         \end{equation}
         Thus $|\Psi\rangle\in \mathrm{Ran}(W_i(x_i,y_i))$, for every $(x_i,y_i)$ of game $\mathcal{G}_i$. Since this holds for each $i=1,\dots,K$, we conclude
         \begin{equation}
             |\Psi\rangle \in \bigcap_{i=1}^K \bigcap_{(x_i,y_i)} \mathrm{Ran}(W_i(x_i,y_i)),
         \end{equation}
         proving CWS is nonempty. 
    \end{proof}
    Thus $(j,\ell)\in\mathrm{CWS}$ iff $|e_j^A\rangle\otimes|e_\ell^B\rangle$ lies in the $+1$ eigenspace of every $W_i(x_i,y_i)$. In particular, each $W_i(x_i,y_i)$ acts as the identity operator on $\ket{\Psi}$ due to the commutativity across all $i$. Hence, any unit vector $\ket{\Psi}\in\bigcap\limits_{i,(x_i,y_i)}\mathrm{Ran}(W_i(x_i,y_i))$ satisfies
    \begin{equation}\label{eq: state from CWS}
        \Big\langle \Psi\ \Big|\ \prod_{i=1}^K W_i(x_i,y_i)\ \Big|\ \Psi\Big\rangle = 1 \quad\text{for all}\ (x_1,y_1),\dots,(x_K,y_K).
    \end{equation}
    Finally, we verify that the chosen state is indeed entangled, as this is necessary for advantageous NLGs. 
    
    \begin{lemma}\label{lemma: entangled-CWS}
        Suppose $\mathrm{CWS}$ contains at least two distinct accepted pairs of indices $\{(k_j,\ell_j)\}_{j=1}^{L}$ with $L\ge 2$. Define, in the diagonal space,
        \begin{equation}
            |\Phi\rangle \coloneqq \frac{1}{\sqrt{L}}\sum_{j=1}^{L} |e_{k_j}^A\rangle\otimes|e_{\ell_j}^B\rangle.
        \end{equation}
        Then $|\Phi\rangle$ is entangled whenever the set $\{|e_{k_j}^A\rangle\}$ (equivalently $\{|e_{\ell_j}^B\rangle\}$) contains at least two orthonormal vectors.
    \end{lemma}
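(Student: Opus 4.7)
The plan is to use the standard Schmidt characterization of entanglement: a bipartite pure state is entangled iff its reduced density matrix has rank at least two. I would compute $\rho_A \coloneqq \mathrm{Tr}_B(|\Phi\rangle\langle\Phi|)$ in the orthonormal $B$-basis and argue that it is mixed under the stated hypotheses.

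First, I would expand the partial trace directly:
\begin{equation*}
    \rho_A \;=\; \frac{1}{L}\sum_{j,j'=1}^{L}\langle e^B_{\ell_{j'}}|e^B_{\ell_j}\rangle\,|e^A_{k_j}\rangle\langle e^A_{k_{j'}}|.
\end{equation*}
Orthonormality of $\{|e^B_\ell\rangle\}$ collapses the inner product to $\delta_{\ell_j,\ell_{j'}}$, and regrouping the surviving terms by distinct values of $\ell$ that appear among the CWS pairs yields
\begin{equation*}
    \rho_A \;=\; \frac{1}{L}\sum_{\ell}|w_\ell\rangle\langle w_\ell|,\qquad |w_\ell\rangle\;\coloneqq\;\sum_{j:\ell_j=\ell}|e^A_{k_j}\rangle.
\end{equation*}
Because the pairs $(k_j,\ell_j)$ are distinct, for each fixed $\ell$ the indices $k_j$ are themselves distinct, so each $|w_\ell\rangle$ is nonzero and supported on an orthonormal subset of the $A$-basis.

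Next, I would use the hypothesis that $\{|e^B_{\ell_j}\rangle\}$ contains at least two orthonormal vectors to conclude that at least two distinct labels $\ell',\ell''$ appear, producing at least two summands $|w_{\ell'}\rangle\langle w_{\ell'}|$ and $|w_{\ell''}\rangle\langle w_{\ell''}|$ in the decomposition of $\rho_A$. Finally, I would argue that $|w_{\ell'}\rangle$ and $|w_{\ell''}\rangle$ are linearly independent, so that $\mathrm{rank}(\rho_A)\ge 2$ and $|\Phi\rangle$ is entangled.

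The main obstacle is this last linear-independence step, which is where the companion assumption on the $A$-side genuinely enters: if all $|w_\ell\rangle$ were collinear, their matched supports in the orthonormal $A$-basis would force the CWS to be a combinatorial rectangle $K'\times L'$ with constant amplitude, and $|\Phi\rangle$ would factor as $\tfrac{1}{\sqrt{|K'|}}\bigl(\sum_{k\in K'}|e^A_k\rangle\bigr)\otimes\tfrac{1}{\sqrt{|L'|}}\bigl(\sum_{\ell\in L'}|e^B_\ell\rangle\bigr)$, contradicting the symmetric two-sided orthonormality hypothesis. In the synchronous situation $k_j=\ell_j$ this step is automatic, since the cross-terms disappear and $\rho_A=\tfrac{1}{L}\sum_j|e^A_{k_j}\rangle\langle e^A_{k_j}|$ has rank exactly $L\ge 2$; the general case requires the combinatorial observation that the pair set is not a rectangle, which is the only nontrivial content of the proof.
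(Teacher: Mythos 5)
Your overall route is the same as the paper's: compute $\rho_A=\mathrm{Tr}_B(|\Phi\rangle\!\langle\Phi|)$ and show it has rank at least two. You are in fact \emph{more} careful than the paper, whose proof writes $\rho_A=\frac{1}{L}\sum_j|e_{k_j}^A\rangle\!\langle e_{k_j}^A|$ in one line --- a formula that is only valid when the $\ell_j$ are pairwise distinct, i.e.\ exactly when your cross terms vanish. Your regrouping into the vectors $|w_\ell\rangle$ handles repeated $\ell$-labels correctly, and up to that point the computation is sound.

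The gap is your final step. You claim that if all $|w_\ell\rangle$ are collinear then $\mathrm{CWS}$ is a combinatorial rectangle and $|\Phi\rangle$ factors, ``contradicting the symmetric two-sided orthonormality hypothesis.'' There is no contradiction: take the accepted pairs $\{(1,1),(1,2),(2,1),(2,2)\}$. Both $\{|e^A_{k_j}\rangle\}$ and $\{|e^B_{\ell_j}\rangle\}$ then contain two orthonormal vectors, yet $|\Phi\rangle=\tfrac{1}{2}\bigl(|e_1^A\rangle+|e_2^A\rangle\bigr)\otimes\bigl(|e_1^B\rangle+|e_2^B\rangle\bigr)$ is a product state. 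So the rectangle configuration is not excluded by the stated hypotheses --- it is a genuine counterexample to the lemma as literally written, and no argument can close the gap without an additional assumption. The paper implicitly assumes the accepted pairs form a partial matching (all $k_j$ pairwise distinct and all $\ell_j$ pairwise distinct); that is precisely what makes its one-line partial trace valid, and under that assumption $\tfrac{1}{\sqrt L}\sum_j|e^A_{k_j}\rangle\otimes|e^B_{\ell_j}\rangle$ is already a Schmidt decomposition with $L\ge 2$ equal nonzero coefficients, so entanglement is immediate. You should either state that matching assumption explicitly (after which your $|w_\ell\rangle$ machinery collapses to the paper's one-liner) or recognize that the rectangle case must be excluded in the hypotheses rather than derived as a contradiction.
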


    \begin{proof}
        Write $|\Phi\rangle=\sum\limits_j \alpha_j\,|e_{k_j}^A\rangle\otimes|e_{\ell_j}^B\rangle$ with $\alpha_j=L^{-1/2}$. The reduced state on Alice is $\rho_A=\mathrm{Tr}_B(|\Phi\rangle\!\langle\Phi|)=\frac{1}{L}\sum\limits_j |e_{k_j}^A\rangle\!\langle e_{k_j}^A|$, which is mixed (rank $\ge 2$) whenever at least two distinct $k_j$ appear. Hence $|\Phi\rangle$ has Schmidt rank $\ge 2$ and is entangled.
    \end{proof}
    
\section{Compressed Games}\label{sec: dependent}
    In this section, we utilize shared algebraic structures, via commuting subalgebras and partial isometries, to argue when and how qubits can be reused across games. Such a reduction from the tensor product case above may be interpreted as game compressing.

    As before, let $\{\mathcal{G}_i\}_{i=1}^K$ be $K$ non-local games, each admitting a perfect quantum strategy on a fixed number of qubits $n_i$ for each player. 
    Let $\mathcal{A}_i$ be the $C^*$-algebra generated by the measurement operators of $\mathcal{G}_i$ for Alice, and $\mathcal{B}_i$ the corresponding algebra for Bob. 
    Suppose that for each game $\mathcal{G}_i$ there exists a shared entangled state 
    \begin{equation}
        \ket{\psi_i} \in \mathbb{C}^{2^{n_i}} \otimes \mathbb{C}^{2^{n_i}}
    \end{equation}
    and measurement operators $\{M_{x_i,a_i}\}_{a_i} \subset \mathcal{A}_i$ and $\{N_{y_i,b_i}\}_{b_i} \subset \mathcal{B}_i$ 
    achieving the winning conditions of $\mathcal{G}_i$ with probability $1$.  Define $N=\sum\limits_{i=1}^K n_i$. 
    
    \begin{theorem}\label{theorem: all games}
        Let $\{\mathcal{G}_i\}_{i=1}^K$ be a set of nonlocal games, each admitting a perfect strategy on $n_i$ local qubits per player. For the $i$-th game let $\mathcal{A}_i$ and $\mathcal{B}_i$ be the $C^*$-algebras generated by Alice's and Bob's measurement operators in some perfect strategy, and let $N:=\sum_{i=1}^K n_i$ denote the additive independent maximum number of qubits needed. Assume for some $n\in\mathbb{N}$ there exists injective $*$-homomorphisms
        \begin{equation}
           \iota_i^A:\mathcal{A}_i\hookrightarrow \mathcal{A}\subseteq\mathcal{B}(\mathbb{C}^{2^n}) \quad\text{and}\quad
            \iota_i^B:\mathcal{B}_i\hookrightarrow \mathcal{B}\subseteq\mathcal{B}(\mathbb{C}^{2^n}), 
        \end{equation}
        such that the images pairwise commute across games
        \begin{equation}
            [\iota_i^A(a),\iota_j^A(a')]=0,\quad [\iota_i^B(b),\iota_j^B(b')]=0,
        \end{equation}
        for all $ i\neq j,\ a\in\mathcal{A}_i,\ a'\in\mathcal{A}_j,\ b\in\mathcal{B}_i, \text{and } b'\in\mathcal{B}_j.$
        Then, there exist game algebras $\mathcal{A}$ and $\mathcal{B}$ for Alice and Bob respectively. 

        Furthermore, there exist a shared entangled state $\ket{\Psi}\in\mathbb{C}^{2^n}\otimes\mathbb{C}^{2^n}$ and, for every $i$ and question pair $(x_i,y_i)$ of $\mathcal{G}_i$, along with POVMs $\{\widetilde M_{x_i,a_i}\} \subseteq\mathcal{A}_i$ and $\{\widetilde N_{y_i,b_i}\} \subseteq\mathcal{B}_i$ such that all $K$ games are won using the joint commuting POVMs
        \begin{equation}
            \widetilde M_{\vec{x},\vec{a}}\;:=\;\prod_{i=1}^K \widetilde M_{x_i,a_i} \quad\text{and}\quad \widetilde N_{\vec{y},\vec{b}}\;:=\;\prod_{i=1}^K \widetilde N_{y_i,b_i}
        \end{equation}
        on $\mathbb{C}^{2^n}$ for players Alice and Bob. Such a strategy guarantees perfect winning. i.e., all $K$ games can be won simultaneously on $n<N$ local qubits per player.  
    \end{theorem}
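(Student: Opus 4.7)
My plan is to construct the lifted measurements and the shared entangled state directly from the commuting embeddings, invoke the Cartan alignment of \Cref{theorem: embedding existence} together with \Cref{lemma: nonempty CWS} and \Cref{lemma: entangled-CWS} to produce a state that simultaneously wins all $K$ games, and close with \Cref{theorem: qubit reduction} for the inequality $n<N$.

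First I would define the lifted POVMs $\widetilde{M}_{x_i,a_i}:=\iota_i^A(M_{x_i,a_i})$ and $\widetilde{N}_{y_i,b_i}:=\iota_i^B(N_{y_i,b_i})$. Because $\iota_i^A$ and $\iota_i^B$ are injective $*$-homomorphisms, positivity and the partition of unity are preserved, so each family is a valid POVM on $\mathbb{C}^{2^n}$. I would then set the global game algebras $\mathcal{A}$ and $\mathcal{B}$ as the $C^*$-algebras generated by $\bigcup_i \iota_i^A(\mathcal{A}_i)$ and $\bigcup_i \iota_i^B(\mathcal{B}_i)$. For the joint operators $\widetilde{M}_{\vec{x},\vec{a}}:=\prod_i \widetilde{M}_{x_i,a_i}$ and $\widetilde{N}_{\vec{y},\vec{b}}:=\prod_i \widetilde{N}_{y_i,b_i}$, the pairwise commutativity of the embedded ranges across distinct games ensures each product is a well-defined positive operator, and the sum over $\vec{a}$ factorizes as $\prod_i \sum_{a_i} \widetilde{M}_{x_i,a_i} = \mathbbm{1}$, confirming the POVM axioms. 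Together this produces the joint commuting POVMs claimed in the statement.

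Next I would form the acceptance operators $W_i(x_i,y_i)$, which inherit pairwise commutativity across $i$ from the commutativity of the embeddings. Invoking \Cref{theorem: embedding existence} places all images inside the commuting abelian $C^*$-subalgebras $\mathcal{T}_A$ and $\mathcal{T}_B$, rendering the full family $\{W_i(x_i,y_i)\}$ simultaneously diagonal in some product basis $\{\ket{e_j^A}\otimes\ket{e_\ell^B}\}$. \Cref{lemma: nonempty CWS} then gives $\mathrm{CWS}\neq\emptyset$, and \Cref{lemma: entangled-CWS} produces an entangled unit vector $\ket{\Psi}\in\mathbb{C}^{2^n}\otimes\mathbb{C}^{2^n}$ supported on the common winning sector on which every $W_i(x_i,y_i)$ acts as the identity. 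Equation~\eqref{eq: state from CWS} then shows this $\ket{\Psi}$ achieves winning probability one for each game simultaneously, proving the perfect joint strategy. The strict dimension inequality $n<N$ follows from \Cref{theorem: qubit reduction} once the Lie-algebraic dimension assumptions from that theorem are verified.

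The hard part will be ensuring the CWS is both nonempty and sufficiently rich to force entanglement rather than a trivial product state. \Cref{lemma: nonempty CWS} as stated handles the disjoint tensor factor regime cleanly, so extending it to the genuinely compressed regime requires reinterpreting each $W_i$ as a projector in the common Cartan-diagonal basis and then intersecting the corresponding index sets inside $\mathcal{T}_A\otimes\mathcal{T}_B$. If this intersection collapses to a single diagonal pair, the resulting state is a product vector and the strategy, while nominally perfect, loses its quantum character; securing at least two distinct accepted index pairs is therefore the crux of the argument and may need either an explicit rank bound on each $W_i$ or a mild non-degeneracy hypothesis on the game algebras to guarantee entanglement uniformly across the collection $\{\mathcal{G}_i\}_{i=1}^K$.
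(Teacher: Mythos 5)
Your proposal follows essentially the same route as the paper's proof: lift each POVM through the commuting embeddings, form the pairwise-commuting acceptance operators $\widetilde W_i(x_i,y_i)$, take the shared state from the CWS machinery (\Cref{lemma: nonempty CWS}, \Cref{lemma: entangled-CWS}), and conclude via $\widetilde W_i\ket{\Psi}=\ket{\Psi}$ together with commutativity that the product acceptance operator has expectation one. The caveats you flag---that \Cref{lemma: nonempty CWS} is only proved in the disjoint-tensor-factor regime and that the strict inequality $n<N$ really rests on the separate hypotheses of \Cref{theorem: qubit reduction}---are genuine, but they are equally unaddressed in the paper's own proof, so your argument is no weaker than the original.
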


    \begin{proof}
    Fix $K$ games $\{\mathcal{G}_i\}_{i=1}^K$, each admitting a perfect strategy. For each $i$, let $\mathcal{A}_i$ (Alice) and $\mathcal{B}_i$ (Bob) denote the $C^*$-algebras generated by the per-game measurements in some perfect strategy, and let $\{M_{x_i,a_i}\}\subset\mathcal{A}_i$, $\{N_{y_i,b_i}\}\subset\mathcal{B}_i$ be the corresponding POVMs.

   From \Cref{theorem: embedding existence} assume there exists $n$ and unital injective $*$-homomorphisms (commuting embeddings)
    \begin{equation}
        \iota_i^A:\mathcal{A}_i\hookrightarrow \mathcal{B}(\mathbb{C}^{2^n}) \quad\text{and}\quad \iota_i^B:\mathcal{B}_i\hookrightarrow \mathcal{B}(\mathbb{C}^{2^n}),
    \end{equation}
    such that for all $i\neq j$ the images within each player commute,
    $[\iota_i^A(\mathcal{A}_i),\iota_j^A(\mathcal{A}_j)]=0$ and
    $[\iota_i^B(\mathcal{B}_i),\iota_j^B(\mathcal{B}_j)]=0$.
    
    Let $\ket{\Psi}\in\mathbb{C}^{2^n}\otimes\mathbb{C}^{2^n}$ be a bipartite state that realizes perfect play for each game after embedding. Such a state is entangled and exists via \Cref{lemma: entangled-CWS}. That is, for all questions $(x_i,y_i)$,
    \begin{equation}\label{eq: embedded expectation}
    \sum_{\substack{(a_i,b_i):\\ \lambda_i(x_i,y_i,a_i,b_i)=1}}
    \bra{\Psi}\,\iota_i^A\!\big(M_{x_i,a_i}\big)\otimes
                 \iota_i^B\!\big(N_{y_i,b_i}\big)\,\ket{\Psi}=1.
    \end{equation}

    For the joint questions $\vec{x}=(x_1,\dots,x_K)$ to Alice and $\vec{y}=(y_1,\dots,y_K)$ to Bob, define the commuting lifted POVMs
    \begin{equation}
        \widetilde M_{x_i,a_i}:=\iota_i^A\!\big(M_{x_i,a_i}\big)\quad\text{and}\quad \widetilde N_{y_i,b_i}:=\iota_i^B\!\big(N_{y_i,b_i}\big),
    \end{equation}
    and the product measurements for the parallel game as 
    \begin{equation}
        \widetilde M_{\vec{x},\vec{a}}\ :=\ \prod_{i=1}^K \widetilde M_{x_i,a_i}\quad\text{and}\quad  \widetilde N_{\vec{y},\vec{b}}\ :=\ \prod_{i=1}^K \widetilde N_{y_i,b_i},
    \end{equation}
    which are well-defined since the factors commute across $i$.

    For each $i$ and $(x_i,y_i)$ let the POVM element whose expectation equals the accepted (by the referee) probability be 
    \begin{equation}
        \widetilde W_i(x_i,y_i)\ :=\ \sum_{\substack{(a_i,b_i): \\\lambda_i(x_i,y_i,a_i,b_i)=1}} \widetilde M_{x_i,a_i}\otimes \widetilde N_{y_i,b_i}.
    \end{equation}
    By construction, the set $\{\widetilde W_i\}_i$ commute pairwise. From positivity and taking tensor products, we have that $0\le \widetilde W_i\le I$.

    Recall that since $\bra{\Psi} \widetilde W_i\ket{\Psi}=1$ follows from  using $\Cref{eq: embedded expectation}$ and $\| (I-\widetilde W_i)^{1/2}\ket{\Psi}\|^2=\bra{\Psi}(I-\widetilde W_i)\ket{\Psi}=0$, we obtain $\widetilde W_i(x_i,y_i)\ket{\Psi}=\ket{\Psi}$ for all $i$. Since the $\widetilde W_i$ all commute,
    \begin{equation}
        \Big(\prod_{i=1}^K \widetilde W_i(x_i,y_i)\Big)\ket{\Psi}=\ket{\Psi}.
    \end{equation}
   For questions $(\vec{x},\vec{y})$, the POVM that represents all $K$ games is 
    \begin{equation}
        \widetilde W(\vec{x},\vec{y})\ :=\ \sum_{(\vec{a},\vec{b})}
        \widetilde M_{\vec{x},\vec{a}}\otimes \widetilde N_{\vec{y},\vec{b}}
        \ =\ \prod_{i=1}^K \widetilde W_i(x_i,y_i),
    \end{equation}
    hence
    \begin{equation}
        \bra{\Psi}\,\widetilde W(x,y)\,\ket{\Psi}
        =\bra{\Psi}\,\prod_{i=1}^K \widetilde W_i(x_i,y_i)\,\ket{\Psi}
        =1.
    \end{equation}
    Therefore, the product strategy $(\ket{\Psi},\{\widetilde M_{\vec{x},\vec{a}}\},\{\widetilde N_{\vec{y},\vec{b}}\})$ wins the`glued' game with probability $1$ on $n<N$ local qubits per player as desired.
    \end{proof}

To demonstrate this result, we end with an example. 
\begin{example}\label{ex: compressed}
    Let $K=2$ and use 2 independent magic square games (MSG). For each player, split the local Hilbert spaces as 
    \begin{equation}
        \mathcal H_A \cong \underbrace{(\mathbb C^2)^{\otimes 2}}_{\text{data }(q_1,q_2)} \otimes \underbrace{\mathbb C^2}_{\text{control }(c_1)},  \qquad \mathcal H_B \cong \underbrace{(\mathbb C^2)^{\otimes 2}}_{\text{data }(q_4,q_5)} \otimes \underbrace{\mathbb C^2}_{\text{control }(c_2)}.
    \end{equation}
    Let $P_0=\ket{0}\!\bra{0}$ and $P_1=\ket{1}\!\bra{1}$ denote the control projectors (on a single qubit).
    
    \textbf{Acceptance operators:}
    Denote $\mathcal G_1,\mathcal G_2$ as the two (independent) copies of MSG. Each has a perfect strategy on the two-qubit \emph{data} register with POVM elements $\{M_{x_i,a_i}\}_{a_i}, \{N_{y_i,b_i}\}_{b_i}$ and acceptance operator
    \begin{equation}
        W_i(x_i,y_i)\;:=\;\sum_{\substack{a_i,b_i:\\\lambda_i(x_i,y_i,a_i,b_i)=1}} M_{x_i,a_i}\otimes N_{y_i,b_i}, \qquad i\in\{1,2\}.
    \end{equation}
    On the standard two-Bell-pair data state, we have $\langle W_i(x_i,y_i)\rangle=1$ for every $(x_i,y_i)$.

    \textbf{Shared entangled state:} Use 2 Bell pairs on the data qubits and 1 Bell pair on the control qubits:
    \begin{equation}
        \ket{\Psi}\;=\;\big(\ket{\phi^+}\big)_{q_1 q_4}\ \otimes\ \big(\ket{\phi^+}\big)_{q_2 q_5}\ \otimes\ \big(\ket{\phi^+}\big)_{c_1c_2}  \in\ (\mathbb C^{2^3})\otimes(\mathbb C^{2^3}).
    \end{equation}

    \textbf{Commuting embeddings:} Define the embedding POVM elements for Alice with respect to the control:
    \begin{align*}
        \widetilde M_{x_1,a_1} &:= M_{x_1,a_1}\otimes P_0\ +\ \widehat M_{x_1,a_1}\otimes P_1,\\
        \widetilde M_{x_2,a_2} &:= \widehat M_{x_2,a_2}\otimes P_0\ +\ M_{x_2,a_2}\otimes P_1.
    \end{align*}
    Analogously for Bob, we have 
    \begin{align*}
        \widetilde N_{y_1,b_1} &:= N_{y_1,b_1}\otimes P_0\ +\ \widehat N_{y_1,b_1}\otimes P_1,\\
        \widetilde N_{y_2,b_2} &:= \widehat N_{y_2,b_2}\otimes P_0\ +\ N_{y_2,b_2}\otimes P_1.
    \end{align*}
    Here the POVMs $\widehat M_{\cdot,\cdot}$, $\widehat N_{\cdot,\cdot}$ act on the data register and are chosen so that, for every $(x_i,y_i)$,
    \begin{equation}
        \sum_{\substack{a_i,b_i:\\\lambda_i(x_i,y_i,a_i,b_i)=1}}\ \widehat M_{x_i,a_i}\otimes \widehat N_{y_i,b_i}\;=\;\mathbbm{1}_{\text{data}}.
    \end{equation}
    With this construction, each $\widetilde M$ (and $\widetilde N$) is block-diagonal in the same control basis $\{P_0,P_1\}$. In block $P_0$, game~1 is ``active'' while game~2 is ``offline''. In block $P_1$, the roles swap. Hence, the images of the 2 game algebras commute element-wise:
    \begin{equation}
        \big[\widetilde M_{x_1,a_1},\ \widetilde M_{x_2,a_2}\big]=0,
        \qquad
        \big[\widetilde N_{y_1,b_1},\ \widetilde N_{y_2,b_2}\big]=0.     
    \end{equation}

    \textbf{Perfect strategy verification:} Define the embedded acceptance operators as
    \begin{align*}
        \widetilde W_i(x_i,y_i)
        &=
        \sum_{\substack{a_i,b_i:\\\lambda_i(x_i,y_i,a_i,b_i)=1}} \widetilde M_{x_i,a_i}\ \otimes\ \widetilde N_{y_i,b_i} \\
        &= \sum_{\substack{a_i,b_i:\\\lambda_i(x_i,y_i,a_i,b_i)=1}} \Big(M_{x_1,a_1}\otimes P_0\;+\;\widehat M_{x_1,a_1}\otimes P_1\Big) \otimes\
        \Big(N_{y_1,b_1}\otimes P_0\;+\;\widehat N_{y_1,b_1}\otimes P_1\Big)\\
        &=  \sum_{\substack{a_i,b_i:\\\lambda_i(x_i,y_i,a_i,b_i)=1}} \Big(M_{x_1,a_1}\otimes N_{y_1,b_1}\Big)\ \otimes\ \big(P_0\otimes P_0\big)
        \;+\;\sum_{\lambda_1=1}
        \Big(\widehat M_{x_1,a_1}\otimes \widehat N_{y_1,b_1}\Big)\ \otimes\ \big(P_1\otimes P_1\big),
    \end{align*}
    where we recall the cross terms vanish ($P_0P_1=0)$. By our choice of ``offline'' block, we have  
    \begin{equation}
        \sum_{\substack{a_i,b_i:\\\lambda_i(x_i,y_i,a_i,b_i)=1}} \widehat M_{x_1,a_1}\otimes \widehat N_{y_1,b_1} = \mathbbm{1}_{\mathrm{data}},
    \end{equation}
    so we obtain a compact form 
    \begin{equation}\label{eq: w1-embedding}
        \widetilde W_1(x_1,y_1)=
        W_1(x_1,y_1)\ \otimes\ \big(P_0\otimes P_0\big)
        \;+\;\mathbbm{1}_{\mathrm{data}}\ \otimes\ \big(P_1\otimes P_1\big).
    \end{equation}
    Similarly, we have 
    \begin{equation}
     \widetilde W_2(x_2,y_2)=
      \mathbbm{1}_{\mathrm{data}}\ \otimes\ \big(P_0\otimes P_0\big)
      \;+\;W_2(x_2,y_2)\ \otimes\ \big(P_1\otimes P_1\big).
    \end{equation}
    The 2 embedded acceptance operators commute, as they are polynomials in the commuting projectors $P_0$, $P_1$, and on $W_i$ disjoint blocks. In the larger Hilbert space, the full state is
    \begin{equation}
          |\Psi\rangle = |\Phi_{\mathrm{data}}\rangle \otimes |\Phi_{2}\rangle_{\text{control}},\qquad\text{where} \quad |\Phi_{2}\rangle=\tfrac{1}{\sqrt{2}}(\,|00\rangle+|11\rangle\,).
    \end{equation}
    For the control EPR state, we see that 
    \begin{equation}
         (P_0\otimes P_0)\,|\Phi_{2}\rangle = \tfrac{1}{\sqrt{2}}|00\rangle,\qquad
        (P_1\otimes P_1)\,|\Phi_{2}\rangle = \tfrac{1}{\sqrt{2}}|11\rangle,\qquad
        \big(P_0\otimes P_0 + P_1\otimes P_1\big)\,|\Phi_{2}\rangle = |\Phi_{2}\rangle.
    \end{equation}
    Using~\Cref{eq: w1-embedding} and the fact that $W_1|\Phi_{\mathrm{data}}\rangle=|\Phi_{\mathrm{data}}\rangle$,
    we see 
    \begin{align*}
        \widetilde W_1(x_1,y_1)\,|\Psi\rangle
        &= \Big(W_1|\Phi_{\mathrm{data}}\rangle\Big)\otimes\big(P_0\otimes P_0\big)|\Phi_2\rangle
        \;+\;\Big(\mathbbm{1}|\Phi_{\mathrm{data}}\rangle\Big)\otimes\big(P_1\otimes P_1\big)|\Phi_2\rangle\\
        &= |\Phi_{\mathrm{data}}\rangle\ \otimes\ \Big[\,(P_0\otimes P_0)+(P_1\otimes P_1)\,\Big]\,|\Phi_2\rangle
        \;=\;|\Psi\rangle.
    \end{align*}
    Similarly, 
    \begin{equation}
         \widetilde W_2(x_2,y_2)\,|\Psi\rangle \;=\; |\Psi\rangle.
    \end{equation}
    On the controlled Bell pair, we have that $\langle\Phi_2|P_0\otimes P_0|\Phi_2\rangle=\langle\Phi_2|P_1\otimes P_1|\Phi_2\rangle=1$
    
    Since $\widetilde{W}_1$ and $\widetilde{W}_2$ commute, their product satisfies $\widetilde W_1(x_1,y_1)\,\widetilde W_2(x_2,y_2)\ket{\Psi}=\ket{\Psi}$. Hence, the acceptance probability is 
    \begin{equation}
      \big\langle \Psi\big|\, \widetilde W_1(x_1,y_1)\,\widetilde W_2(x_2,y_2)\,\big|\Psi\big\rangle = \frac{1}{2}\cdot 1+ \frac{1}{2}\cdot 1=1, 
      \quad\text{for every question pair}.
    \end{equation}

    \textbf{Qubit count and compression:} In this example, each MSG needs $n_1=n_2=2$ qubits per player; the naive tensor baseline is $N=n_1+n_2=4$. This construction uses $n=2$ (data) $+\,1$ (control) $=3<4$, and satisfies the commuting-embeddings hypothesis of \Cref{theorem: all games}. \Cref{fig: two-msg-compress} illustrates this compression example. 
\end{example}

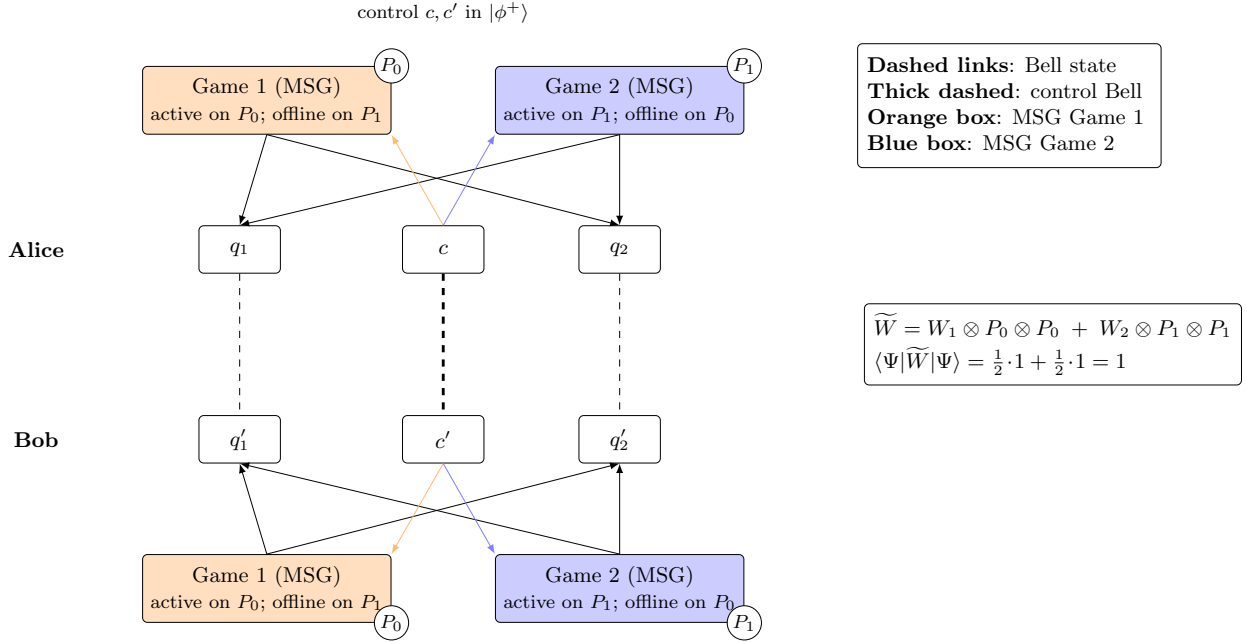
\begin{figure}[t]
\centering
\resizebox{\textwidth}{!}{%
\begin{tikzpicture}[font=\small, >=latex]

\tikzset{
  qubit/.style={draw, rectangle, rounded corners=2pt, minimum width=1.2cm, minimum height=0.7cm},
  gbox1/.style={draw, rounded corners=2pt, fill=orange!25, minimum width=3.6cm, minimum height=1.0cm, align=center},
  gbox2/.style={draw, rounded corners=2pt, fill=blue!20,   minimum width=3.6cm, minimum height=1.0cm, align=center},
  legend/.style={draw, rounded corners=2pt, inner sep=4pt, align=left}
}

\node at (-1.8, 2.8) {\textbf{Alice}};
\node at (-1.8, 0.0) {\textbf{Bob}};

\node[qubit] (Aq1) at (1.2,  2.8) {$q_1$};
\node[qubit] (Aq2) at (6.8,  2.8) {$q_2$};
\node[qubit] (Ac)  at (4.2,  2.8) {$c$};

\node[qubit] (Bq1) at (1.2,  0.0) {$q'_1$};
\node[qubit] (Bq2) at (6.8,  0.0) {$q'_2$};
\node[qubit] (Bc)  at (4.2,  0.0) {$c'$};

\draw[dashed] (Aq1.south) -- (Bq1.north);
\draw[dashed] (Aq2.south) -- (Bq2.north);
\draw[dashed, very thick] (Ac.south)  -- (Bc.north);

\node[gbox1] (G1A) at (1.6, 5.0) {Game 1 (MSG)\\\footnotesize active on $P_0$; offline on $P_1$};
\node[gbox2] (G2A) at (6.8, 5.0) {Game 2 (MSG)\\\footnotesize active on $P_1$; offline on $P_0$};

\node[gbox1] (G1B) at (1.6, -2.2) {Game 1 (MSG)\\\footnotesize active on $P_0$; offline on $P_1$};
\node[gbox2] (G2B) at (6.8, -2.2) {Game 2 (MSG)\\\footnotesize active on $P_1$; offline on $P_0$};

\draw[->] (G1A.south) -- (Aq1.north);
\draw[->] (G1A.south) -- (Aq2.north);
\draw[->] (G2A.south) -- (Aq1.north);
\draw[->] (G2A.south) -- (Aq2.north);

\draw[->] (G1B.north) -- (Bq1.south);
\draw[->] (G1B.north) -- (Bq2.south);
\draw[->] (G2B.north) -- (Bq1.south);
\draw[->] (G2B.north) -- (Bq2.south);

\node at (4.2, 6.3) {\footnotesize control $c,c'$ in $\ket{\phi^+}$};
\draw[->, orange!55] (Ac.north) -- (G1A.south east);
\draw[->, blue!50] (Ac.north) -- (G2A.south west);
\draw[->, orange!55] (Bc.south) -- (G1B.north east);
\draw[->, blue!50] (Bc.south) -- (G2B.north west);

\node[draw, circle, inner sep=1pt, fill=white] at (G1A.north east) {\footnotesize $P_0$};
\node[draw, circle, inner sep=1pt, fill=white] at (G2A.north east) {\footnotesize $P_1$};
\node[draw, circle, inner sep=1pt, fill=white] at (G1B.south east) {\footnotesize $P_0$};
\node[draw, circle, inner sep=1pt, fill=white] at (G2B.south east) {\footnotesize $P_1$};

\node[legend, anchor=west] (Eq) at (10.4, 1.4) {%
\(\displaystyle \widetilde W = W_1\otimes P_0\otimes P_0\;+\; W_2\otimes P_1\otimes P_1\)\\[3pt]
\(\langle \Psi|\widetilde W|\Psi\rangle = \tfrac12\!\cdot\!1 + \tfrac12\!\cdot\!1 = 1\)
};

\node[legend, anchor=south west] at (10.3, 4.0) {%
\begin{tabular}{@{}l@{}}
\textbf{Dashed links}: Bell state\\
\textbf{Thick dashed}: control Bell\\
\textbf{Orange box}: MSG Game 1\\
\textbf{Blue box}: MSG Game 2
\end{tabular}
};

\end{tikzpicture}%
}
\caption{Two MSGs from~\Cref{ex: compressed} compressed to 3 qubits per player.}
\label{fig: two-msg-compress}
\end{figure}

\begin{remark}
    Dimension witness~\cite{brunner2013dimension} is a tool used to allow one to test the dimension of an unknown physical system in a device-independent manner, that is, without placing assumptions about the functioning of the devices used in the experiment. From \Cref{theorem: all games}, if all $K$ games can be embedded into commuting subalgebras on $\mathbb{C}^{2^n}$, for some $n<\sum_in_i$ qubits, then a dimension witness could, in principle, experimentally certify whether such a compressed embedding exists. i.e., use a dimension witness as a lower bound.
\end{remark}

    We summarize the key results into a pseudo-algorithmic version of how to leverage multiple NLGs, each of which has guaranteed perfect strategy (proven in \Cref{sec: dependent}), to invoke a parallel computation in order to generate a strategy for the game that utilizes fewer qubits (proven below in \Cref{theorem: qubit reduction}). The pseudocode is represented in Algorithm \ref{alg:glued}.

\SetKwComment{Comment}{/* }{ */}
\SetAlgoNlRelativeSize{0} 
\SetNlSty{textbf}{(}{)}   
\SetAlgoNlRelativeSize{-1} 
\SetKwInOut{Input}{Input}
\SetKwInOut{Output}{Output}

\begin{algorithm}[h]
    \caption{Game Compression via Commuting Embeddings}
    \label{alg:glued}
    \Input{Four NLGs $\{\mathcal{G}_i\}_{i=1}^4$, each with a perfect quantum strategy on $n_i$ local qubits.}
    \Output{A compressed game strategy on $n < N = \sum_i n_i$ local qubits.}
    
    $N \gets \sum_i n_i$ \Comment*[l]{Naive additive qubit count}
    $n \gets \min\{m\in\mathbb{N}: 2^m-1\geq \max\{\mathrm{dim}(\mathfrak k_A)\}, \mathrm{dim}(\mathfrak k_B)\}$ \Comment*[l]{where $\mathfrak k_A,\mathfrak k_B\subset \mathfrak{su}(2^n)$}
    
    Identify $\mathcal{A}_i, \mathcal{B}_i$ \Comment*[l]{$C^*$-algebras from measurement operators $\{M_{x,a}\}_a$ and $\{N_{y,b}\}_b$}
    Identify $\mathfrak{g}_{\mathcal{A}_i}, \mathfrak{g}_{\mathcal{B}_i}$ \Comment*[l]{Corresponding Lie algebras} 
    
    Construct embeddings 
    $\iota^A_i: \mathcal{A}_i \to \mathcal{B}(\mathbb{C}^{2^n})$, 
    $\iota^B_i: \mathcal{B}_i \to \mathcal{B}(\mathbb{C}^{2^n})$ \Comment*[l]{Injective $\ast$-homomorphisms}
    Enforce $[\iota^A_i(a), \iota^A_j(a')] = 0$, $[\iota^B_i(b), \iota^B_j(b')] = 0$ for $i \neq j$ \Comment*[l]{Pairwise commutativity}
    
    $\tilde{M}_{x_i,a_i} \gets \iota^A_i(M_{x_i,a_i})$ \Comment*[l]{Embedded POVM}
    $\tilde{N}_{y_i,b_i} \gets \iota^B_i(N_{y_i,b_i})$ \Comment*[l]{Embedded POVM}
    
    Choose entangled state $|\Psi\rangle \in \mathbb{C}^{2^{n}} \otimes \mathbb{C}^{2^{n}}$ \; 

    Confirm $n<N$\;
    
    \Return $\{\tilde{M}_{x_i,a_i}, \tilde{N}_{y_i,b_i}, |\Psi\rangle \}$

\end{algorithm}

    \section{Conclusion}\label{sec: conclusion}
In this work, we introduced an algebraic framework for playing multiple non-local games in parallel using fewer qubits than the naive tensor product baseline. The key observation is that perfect strategies can be consolidated whenever associated game algebras embed into commuting subalgebras. Two compression results were proved. First, for a referee who selects a single game at random from a finite family of games, we showed that one maximally entangled state of dimension equal to the largest game suffices. This allows the user to avoid repeating state preparation. Second, for parallel play, we proved that commuting embeddings of each game algebra enables simultaneous perfect strategies on few qubits. This reduction replaces additive dimensional growth. 

Following this line of work, several avenues remain. Our qubit bounds are certified by commuting embeddings and Cartan decomposition alignments. These results have yet to be extended to robust (noisy or approximate) scenarios, which would be ideal for near term devices. Furthermore, analyzing the computational complexity for the embeddings and the Cartan decompositions would complement our work well. Finally, incorporating constraints from limited connectivity in distributed and multi-QPU architectures to apply these results and integrating self-testing guarantees in this regime are all promising directions to follow this work.

With this work, we open a pathway to run richer device-independent tests and parallel quantum protocols on constrained hardware, with clear pathways to device verification, benchmarking, and implementation.

\section{Acknowledgments}\label{sec: acknowledgments}
The authors would like to thank Kathleen Hamilton and Eugene Dumitrescu for valuable discussions and insights that inspired this work.

This work was supported by Oak Ridge National Laboratory's (ORNL) Laboratory Directed Research and Development (LDRD) Seed Program. This work was partially supported by the U.S. Department of Energy, Office of Science, Office of Nuclear Physics Quantum Horizons: QIS Research and Innovation for Nuclear Science program at ORNL under FWP ERKBP91.


\begin{thebibliography}{10}

\bibitem{arkhipov2012extending}
Alex Arkhipov.
\newblock Extending and characterizing quantum magic games.
\newblock {\em arXiv preprint arXiv:1209.3819}, 2012.

\bibitem{brunner2013dimension}
Nicolas Brunner, Miguel Navascu{\'e}s, and Tam{\'a}s V{\'e}rtesi.
\newblock Dimension witnesses and quantum state discrimination.
\newblock {\em Physical review letters}, 110(15):150501, 2013.

\bibitem{chehade2025game}
Sarah Chehade, Andrea Delgado, and Elaine Wong.
\newblock A game-theoretic quantum algorithm for solving magic squares.
\newblock {\em arXiv preprint arXiv:2505.13366}, 2025.

\bibitem{clauser1969proposed}
John~F Clauser, Michael~A Horne, Abner Shimony, and Richard~A Holt.
\newblock Proposed experiment to test local hidden-variable theories.
\newblock {\em Physical review letters}, 23(15):880, 1969.

\bibitem{cui2020generalization}
David Cui, Arthur Mehta, Hamoon Mousavi, and Seyed~Sajjad Nezhadi.
\newblock A generalization of chsh and the algebraic structure of optimal strategies.
\newblock {\em Quantum}, 4:346, 2020.

\bibitem{furches2025application}
Jim Furches, Sarah Chehade, Kathleen Hamilton, Nathan Wiebe, and Carlos~Ortiz Marrero.
\newblock Application-level benchmarking of quantum computers using nonlocal game strategies.
\newblock {\em Quantum Science and Technology}, 10(4):045002, 2025.

\bibitem{hall2013lie}
Brian~C Hall.
\newblock Lie groups, lie algebras, and representations.
\newblock In {\em Quantum Theory for Mathematicians}, pages 333--366. Springer, 2013.

\bibitem{Ji2021}
Z.~Ji, A.~Natarajan, T.~Vidick, J.~Wright, and H.Ji Yuen.
\newblock {MIP*} = {RE}.
\newblock {\em Commun. ACM}, 64(11):131--138, 2021.

\bibitem{khaneja2001cartan}
N.~Khaneja and S.~J. Glaser.
\newblock Cartan decomposition of su ($2^n$) and control of spin systems.
\newblock {\em Chemical Physics}, 267(1-3):11--23, 2001.

\bibitem{larocca2022group}
Mart{\'\i}n Larocca, Fr{\'e}d{\'e}ric Sauvage, Faris~M Sbahi, Guillaume Verdon, Patrick~J Coles, and Marco Cerezo.
\newblock Group-invariant quantum machine learning.
\newblock {\em PRX quantum}, 3(3):030341, 2022.

\bibitem{manvcinska2021glued}
Laura Man{\v{c}}inska, Thor~Gabelgaard Nielsen, and Jitendra Prakash.
\newblock Glued magic games self-test maximally entangled states.
\newblock {\em arXiv preprint arXiv:2105.10658}, 2021.

\bibitem{Mermin1990b}
N.~David Mermin.
\newblock Simple unified form for the major no-hidden-variables theorems.
\newblock {\em Phys. Rev. Lett.}, 65:3373--3376, Dec 1990.

\bibitem{miller2012optimal}
Carl~A Miller and Yaoyun Shi.
\newblock Optimal robust quantum self-testing by binary nonlocal xor games.
\newblock {\em arXiv preprint arXiv:1207.1819}, 2012.

\bibitem{mousavi2022nonlocal}
Hamoon Mousavi, Seyed~Sajjad Nezhadi, and Henry Yuen.
\newblock Nonlocal games, compression theorems, and the arithmetical hierarchy.
\newblock In {\em Proceedings of the 54th annual ACM SIGACT symposium on theory of computing}, pages 1--11, 2022.

\bibitem{pan2021oblivious}
AK~Pan.
\newblock Oblivious communication game, self-testing of projective and nonprojective measurements, and certification of randomness.
\newblock {\em Physical Review A}, 104(2):022212, 2021.

\bibitem{vsupic2020self}
Ivan {\v{S}}upi{\'c} and Joseph Bowles.
\newblock Self-testing of quantum systems: a review.
\newblock {\em Quantum}, 4:337, 2020.

\bibitem{todorov2024quantum}
Ivan~G Todorov and Lyudmila Turowska.
\newblock Quantum no-signalling correlations and non-local games.
\newblock {\em Communications in Mathematical Physics}, 405(6):141, 2024.

\end{thebibliography}
\end{document}